\newcommand{\G}{\mathcal{G}}
\newcommand{\E}{\mathcal{E}}
\newcommand{\F}{\mathcal{F}}
\newcommand{\HH}{\mathcal{H}}
\newcommand{\LG}{\mathcal{LG}}
\newcommand{\ELG}{\mathcal{LG}^\mathrm{ext}}
\newcommand{\cS}{\mathcal{S}}
\newcommand{\V}{\mathcal{V}}
\newcommand{\YY}{{Y}}
\newcommand{\ZZ}{{Z}}
\newcommand{\EE}{{\E}}
\newcommand{\ZO}{\{0,1\}}
\def\B{\{0,1\}}
\newcommand\R{{\mathbb R}}
\newcommand{\Learn}{\mathsf{Load}}
\newcommand{\Load}{\mathsf{Load}}
\newcommand{\Triangle}{\mathsf{Triangle}}
\newcommand{\Loadall}{\mathsf{DenseLoad}}
\newcommand{\Loadsparse}{\mathsf{SparseLoad}} %{S}\mathrl{parse\mathcal{L}oad}
\newcommand{\Advpm}{\mathrm{Adv}^\pm}
\newcommand{\GJohnson}{\G_\mathrm{Johnson}}
\newcommand{\GOR}{\G_{\mathrm{OR}}}
\newcommand{\Exp}{\mathop{\mathrm{Exp}}}
\newcommand{\red}[1]{{\color{red} #1}}
\newtheorem{theorem}{Theorem}[section]
\newtheorem{lemma}[theorem]{Lemma}
\newtheorem{definition}[theorem]{Definition}
\def\env@cases{%
  \let\@ifnextchar\new@ifnextchar
  \left\lbrace
  \def\arraystretch{1.2}%
  \array{l@{\quad}l@{}}% Formerly @{}l@{\quad}l@{}
}
\date{}
\title{Extended Learning Graphs for Triangle Finding\footnote{This work has been partially supported by 
the European Commission project Quantum Algorithms (QALGO) and
    the French ANR Blanc project RDAM.}}
\author[1]{Titouan Carette}
\author[2]{Mathieu Lauri\`ere}
\author[3]{Fr\'ed\'eric Magniez}
\affil[1]{Ecole Normale Supérieure de Lyon, Lyon, France}
\affil[2]{%CNRS, IRIF, Univ Paris Diderot, Paris, France,\\ \&\, 
NYU-ECNU Institute of Mathematical Sciences at NYU Shanghai, China}
\affil[3]{CNRS, IRIF, Univ Paris Diderot, Paris, France}
\begin{document}

\maketitle

\begin{abstract}
We present new quantum algorithms for Triangle Finding improving its best previously known quantum query complexities for both dense and spare instances.
For dense graphs on $n$ vertices, we get a query complexity of $O(n^{5/4})$ without any of the extra logarithmic factors  present in the previous algorithm of Le Gall [FOCS'14].
For sparse graphs with $m\geq n^{5/4}$  edges, we get a query complexity of $O(n^{11/12}m^{1/6}\sqrt{\log n})$, which is better than the one obtained by Le Gall and Nakajima [ISAAC'15] when $m \geq n^{3/2}$.
We also obtain an algorithm with query complexity ${O}(n^{5/6}(m\log n)^{1/6}+d_2\sqrt{n})$ where $d_2$ is the variance of the degree distribution.

Our algorithms are designed and analyzed in a new model of learning graphs that we call extended learning graphs. In addition, we present a framework in order to easily combine and analyze them. 
As a consequence we get much simpler algorithms and analyses than previous algorithms of Le Gall {\it et al} based on the MNRS quantum walk framework [SICOMP'11].
\end{abstract}

%\tableofcontents

% SSSSSSSSSSSSSSSSSSSSSSSSSSSSSSSSSSSSSSSSSSSSSSSSSSSSSSSSSSSSSSSSSSSSSSSSSSSSSSSSSSSSSSSSSSSSSSSSSSSSSSSSSSSSSSSSSSSSSSSSSSSSSSSSSSSSSSSSSSSS %
\section{Introduction}
% SSSSSSSSSSSSSSSSSSSSSSSSSSSSSSSSSSSSSSSSSSSSSSSSSSSSSSSSSSSSSSSSSSSSSSSSSSSSSSSSSSSSSSSSSSSSSSSSSSSSSSSSSSSSSSSSSSSSSSSSSSSSSSSSSSSSSSSSSSSS %
Decision trees form a simple model for computing Boolean functions by
successively reading the input bits until the value of the function
can be determined. In this model, the \emph{query complexity} is the number of 
 input bits queried.
This allows us to study the complexity of a
function in terms of its structural properties. For instance, sorting an array of size $n$
can be done using $O(n\log n)$ comparisons, and this is optimal for comparison-only algorithms.

In an extension of the deterministic model, one can also allow randomized and even quantum computations.
Then the speed-up can be exponential for partial functions ({\it i.e.} problems with promise) when we compare 
deterministic with randomized computation, and randomized with quantum computation.
The case of total functions is rather fascinating. For them, the best possible gap can only be polynomial
between each models~\cite{nis91,bbcmw01}, which is still useful in practice for many problems.
But surprisingly, the best possible gap is still an open question, event if it was improved for both models  very recently~\cite{abblss16,abk16}.
%Nonetheless, for total functions, the speed-up provided by randomized computations can only be polynomial~\cite{nis91},
%but the best possible gap is still open.
%However, after thirty years of very little progress, 
%it has been significantly improved~\cite{abblss16}.
In the context of quantum computing, query complexity %has been also extensively studied.
%is a black-box model of quantum computation, where the resource measured is the number of queries to the input needed to compute a function.  
%There, this model 
captures the great algorithmic successes 
of quantum computing like the search algorithm of Grover~\cite{gro96} and the period finding subroutine of Shor's 
factoring algorithm~\cite{shor97}, while at the same time it is simple enough that one can often show tight lower bounds.
%Again, for total functions, the speed-up is at most polynomial~\cite{bbcmw01} and the best possible separation has been improved very recently~\cite{abk16}.

Reichardt~\cite{rei11} showed that the general adversary bound, formerly just a lower bound technique 
for quantum query complexity~\cite{hls07}, is also an upper bound.  This characterization 
has opened an avenue for designing quantum query algorithms.  However, even for simple functions it is challenging to 
find an optimal bound.
Historically, studying the query complexity of specific functions led to amazing progresses in our understanding of quantum computation, by providing new algorithmic concepts and tools for analyzing them.
Some of the most famous problems in that quest are Element Distinctness and Triangle Finding~\cite{bdhhmsw05}.
Element Distinctness consists in deciding if a function takes twice the same value on a domain of size $n$,
whereas Triangle Finding consists in determining if an $n$-vertex graph has a triangle. 
%A variant of Element Distinctness is $k$-Element Distinctness for which we are now checking if the function takes $k$ times the same value.
Quantum walks were used to design algorithms with optimal query complexity for Element Distinctness. Later on,
 a general framework for designing quantum walk based algorithms was developed with various applications~\cite{mnrs11}, including for Triangle Finding~\cite{mss07}.

For seven years, no progress on Triangle Finding was done until Belovs developed his beautiful
model of \emph{learning graphs}~\cite{bel12}. Learning graphs can be
viewed as the minimization form of the general adversary bound with an additional structure imposed on the form 
of the solution.  This additional structure makes learning graphs easier to reason about
without any background on quantum computing.
% by ensuring that the constraints are {automatically} satisfied, leaving one to focus on optimizing the objective value. 
On the other hand, they may not provide always optimal algorithms.
Learning graphs have an intuitive interpretation in terms of electrical networks~\cite{bcjkm13}.
%which make them easy to manipulate 
Their complexity is directly connected to the total conductance of the underlying network and its effective resistance. Moreover this characterization leads to a generic quantum implementation which is time efficient and preserves  query complexity.

Among other applications, learning graphs have been used to design an algorithm for Triangle Finding  with query complexity $O(n^{35/27})$~\cite{spanCert}, 
improving on the previously known bound $\tilde{O}(n^{1.3})$ obtained by a quantum walk based algorithm~\cite{mss07}.
Then the former was improved by another learning graph using $O(n^{9/7})$ queries~\cite{lms15}.
This learning graph has been proven optimal for the original class of learning graphs~\cite{BR13}, known as \emph{non-adaptive learning graphs}, for which the conductance of each edge is constant.
%Indeed,  non-adaptive learning graphs for triangle finding also works in a more general {\em weighted} setting.  
%Thus, for non-adaptive learning graphs, Triangle Finding has the same query complexity than Triangle Sum (given a weighted graph, determine if there is a triangle whose edge labels sum to $0$) 
%whose query complexity is exactly~$\Omega(n^{9/7})$.
Then, Le Gall showed that quantum walk based algorithms are indeed stronger than non-adaptive learning graphs
for Triangle Finding by constructing a new quantum algorithm with query complexity~$\tilde{O}(n^{5/4})$~\cite{gal14}.
His algorithm combines in a novel way combinatorial arguments on graphs with quantum walks. One of the key ingredient is the use of an algorithm due to Ambainis for implementing Grover Search in a model whose queries may have variable complexities~\cite{amb10}. 
Le Gall used this algorithm
% of Ambainis 
to average the complexity of different branches of its quantum walk
in a quite involved way.
%, since the framework of~\cite{mnrs11} are not made for such an averaging.
In the specific case of sparse graphs, those ideas have also demonstrated their advantage for Triangle Finding on previously known algorithms~\cite{ls15}.

The starting point of the present work is to investigate a deeper understanding of learning graphs and their extensions.
Indeed, various variants have been considered without any unified and intuitive framework.
For instance, the best known quantum algorithm for $k$-Element Distinctness 
(a variant of Element Distinctness where we are now checking if the function takes $k$ times the same value)
has been designed by several clever
relaxations of the model of learning graphs~\cite{bel12}.  Those relaxations led to algorithms 
more powerful than non-adaptive learning graphs, but at the price of a more complex and less intuitive analysis.
In \textbf{Section~\ref{sec:extlg}}, we extract several of those concepts that we formalize in our 
new model of \emph{extended learning graphs} (\textbf{Definition~\ref{def:extLGfct}}).
We prove that their complexity (\textbf{Definition~\ref{extlg}}) is always an upper bound on the query complexity of the best quantum algorithm solving the same problem (\textbf{Theorem~\ref{thm:ELG-querycomplexity}}).
We also introduce the useful notion of \emph{super edge} (\textbf{Definition~\ref{def:super}}) for compressing some given portion of a learning graph. We use them to encode efficient learning graphs querying a part of the input on some given index set (\textbf{Lemmas~\ref{lem:complexityLearnAll} and~\ref{lem:complexityLearn}}).
In some sense, we transpose to the learning graph setting the strategy of 
finding all $1$-bits of some given sparse input using Grover Search.
%with a complexity depends on its Hamming weight. % of the input.
%The resulting complexity depends on the number of such bits.

In \textbf{Section~\ref{subsec:decomp-lemmas}}, we provide several tools for composing our learning graphs.
We should first remind the reader that, since extended learning graphs cover a restricted class of quantum algorithms,
it is not possible to translate all quantum algorithms in that model. Nonetheless we succeed
%transfer to our model of extended learning graphs 
for two important algorithmic techniques:
Grover Search with variable query complexities~\cite{amb10} (\textbf{Lemma~\ref{groverExtended}}),
and Johnson Walk based quantum algorithms~\cite{mss07,mnrs11} (\textbf{Theorem~\ref{johnsonExtended}}).
In the last case, we show how to incorporate the use of super~edges for querying sparse inputs.

We validate the power and the ease of use of our framework on Triangle Finding in \textbf{Section~\ref{sec:triangle}}.
First, denoting $n$ is the number of vertices, we provide a simple adaptive learning graph with query complexity $O(n^{5/4})$, 
whose analysis is arguably much simpler than the algorithm of Le Gall, and whose complexity is cleared of logarithmic factors (\textbf{Theorem~\ref{thm:dense}}).
This also provides the first natural separation between non-adaptive and adaptive learning graphs.
Then, we focus on sparse input graphs and develop extended learning graphs. All algorithms of~\cite{ls15} could be rephrased in our model. But more importantly, we show that one can design more efficient ones.
For sparse graphs with $m\geq n^{5/4}$  edges, we get a learning graph with query complexity ${O}(n^{11/12}m^{1/6}\sqrt{\log n})$, which improves the results of~\cite{ls15} when $m \geq n^{3/2}$ (\textbf{Theorem~\ref{thm:sparse}}).
We also construct another learning graph %whose complexity is parametrized 
with query complexity ${O}(n^{5/6}(m\log n)^{1/6}+d_2\sqrt{n})$,
where $d_2$ is the variance  of the degree distribution
(\textbf{Theorem~\ref{thm:sparsenew}}).
To the best of our knowledge, this is the first quantum algorithm for Triangle Finding whose complexity depends on this parameter $d_2$.

% SSSSSSSSSSSSSSSSSSSSSSSSSSSSSSSSSSSSSSSSSSSSSSSSSSSSSSSSSSSSSSSSSSSSSSSSSSSSSSSSSSSSSSSSSSSSSSSSSSSSSSSSSSSSSSSSSSSSSSSSSSSSSSSSSSSSSSSSSSSS %
\section{Preliminaries}\label{prelim}
% SSSSSSSSSSSSSSSSSSSSSSSSSSSSSSSSSSSSSSSSSSSSSSSSSSSSSSSSSSSSSSSSSSSSSSSSSSSSSSSSSSSSSSSSSSSSSSSSSSSSSSSSSSSSSSSSSSSSSSSSSSSSSSSSSSSSSSSSSSSS %
%\subsection{Background}
%For any integer $q\geq 1$, let $[q] = \{1, 2,\ldots, q\}$.
We will deal with Boolean functions  of the form 
$f : \ZZ \to \B$, where $\ZZ\subseteq \B^N$.
 %where the input to the function can be thought of as the complete undirected graph 
 %(possibly with self-loops) 
%on vertex set $[n]$.
%, whose edges are colored by elements from $[q]$. When $q=2$, the input is of course just a directed graph (again possibly with self-loops).
In the query model, given a function $f : \ZZ \to \B$, the goal is to evaluate $f(z)$ by making 
as few queries to the $z$ as possible.  
A query is a question of the form `What is the value of $z$ in position 
$i\in [N]$?', to which is returned $z_i\in\B$. 

In this paper we will discuss functions whose inputs are themselves graphs. 
Then $z$ will encode an undirected graph $G$ on vertex set $[n]$, that is $N={n \choose 2}$ in order to encode the possible edges of $G$. Then $z_{ij}=1$ iff $ij$ is an edge of $G$.
 %To prevent confusion we will refer to vertices and edges of the learning graph as \emph{$L$-vertices} and \emph{$L$-edges} respectively.

 In the quantum query model, these queries can be 
asked in superposition.  We refer the reader to the survey~\cite{hs05} for precise definitions and background on
the quantum query model.  
We denote by $Q(f)$ the number of queries needed by a quantum algorithm to 
evaluate $f$ with error at most $1/3$.
Surprisingly, the general adversary bound, that we define below, is a tight characterization of $Q(f)$.
\begin{definition}
\label{def:advpm-bound-sdp}
	Let $f : \ZZ \to \ZO$ be a function, with $\ZZ \subseteq \B^N$. The \emph{general adversary bound} $\Advpm(f)$ is defined as the optimal value of the following optimization problem:
$$ \text{minimize: }  \max_{z \in \ZZ} \sum_{j \in [n]} X_j [z,z]  \quad
\text{subject to: } 
   		\begin{aligned}[t]
                   \ \sum_{\mathclap{j \in [n] \,:\, x_j \neq y_j}}  X_j [x,y] & = 1, \ \text{when  $f(x) \neq f(y)$},
                    \\[-2ex]
                  X_{j} & \succeq 0,  \ \forall  j \in [N],
                \end{aligned}
$$
%  \begin{alignat*}{2}
%  & \text{minimize: } & & \max_{z \in \ZZ} \sum_{j \in [n]} X_j [z,z]  \\
%   & \text{subject to: }& \quad & \sum_{\mathclap{j \in [n] \,:\, x_j \neq y_j}} \; 
%   		\begin{aligned}[t]
%                    X_j [x,y] & = 1,& \hbox{ whenever } &  f(x) \neq f(y),
%                    \\[1ex]
%                  X_{j} & \succeq 0, & \forall & \; j \in [N],
%                \end{aligned}
%\end{alignat*}
  where the optimization is over positive semi-definite matrices $X_j$ with rows and columns labeled by the elements of $\ZZ$, and $X_j[x,y]$ is used to denote the $(x,y)$-entry of $X_j$.
%  element of matrix $X$ on the intersection of the row and column labeled by $x$ and $y$, respectively.
\end{definition}
\begin{theorem}[\cite{hls07,lmrss11,rei11}]\label{qadv}
	$Q(f) = \Theta(\Advpm(f))$.
\end{theorem}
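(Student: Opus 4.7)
The plan is to treat the two inequalities separately, since they correspond to independent and historically distinct contributions, and the statement is simply their conjunction.

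For the lower bound $\Advpm(f) = O(Q(f))$, I would use the standard adversary-style progress argument. Suppose a quantum algorithm $\mathcal{A}$ evaluates $f$ with $T$ queries and error at most $1/3$. Given a feasible solution $\{X_j\}$ to the SDP in Definition~\ref{def:advpm-bound-sdp}, I would define, for each pair $(x,y)$ with $f(x) \neq f(y)$, the quantity $\langle \psi_t^x | \Gamma | \psi_t^y \rangle$ where $\Gamma$ is an adversary matrix derived from the $X_j$'s and $|\psi_t^z\rangle$ is the state of the algorithm after $t$ queries on input $z$. The two key steps are: (i) at $t=0$ this inner product equals the appropriate entry of $\Gamma$ (the inputs have not yet been touched), and at $t=T$ it must be close to $0$ because correctness forces the final states on $x$ and $y$ to be nearly orthogonal; (ii) a single query on index $j$ can only decrease this inner product by an amount controlled by the spectral norm of the principal submatrix of $X_j$. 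Combining (i) and (ii) over all adversarial pairs and normalizing by the SDP constraints yields $T = \Omega(\Advpm(f))$.

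For the upper bound $Q(f) = O(\Advpm(f))$, which is Reichardt's contribution, the route I would take is via span programs. First, I would pass to the dual of the SDP in Definition~\ref{def:advpm-bound-sdp}, which assigns vectors $|v_{j,z}\rangle$ to each input/index pair and whose optimal value coincides with $\Advpm(f)$ by SDP duality. These dual vectors can be packaged directly into a span program $P$ computing $f$ whose witness size on every input equals the dual objective. Next I would invoke the span-program-to-algorithm compiler: associate to $P$ a bipartite graph with weighted edges, define a unitary $U$ that is a product of reflections across the span-program subspaces (with the reflections at input vertices implemented using two queries each), and run phase estimation on $U$ starting from a designated initial state. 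A standard analysis shows that the zero-phase eigenspace of $U$ projects onto the initial state in a way that distinguishes $1$-inputs from $0$-inputs, with precision inversely proportional to the witness size; taking $O(\Advpm(f))$ phase-estimation iterations and amplifying to constant error yields the desired query complexity.

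The main obstacle, and the part I would need to be most careful with, is the second half: both the equivalence between the dual SDP and span program witness sizes, and the construction/analysis of the walk unitary, involve a fair amount of linear-algebraic bookkeeping (separating positive and negative witnesses, handling the kernel structure of $U$, and relating spectral gaps to witness sizes). The lower bound, by contrast, is a clean hybrid argument once one writes down the right adversary matrix. Since this theorem is quoted rather than proved in the paper, I would not reproduce the full technical details but instead cite~\cite{hls07} for the lower bound and~\cite{lmrss11,rei11} for the upper bound, using the proposal above only as a conceptual roadmap.
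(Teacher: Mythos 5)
The paper offers no proof of this theorem: it is imported verbatim from the cited literature, exactly as you ultimately propose to do. Your two-part roadmap (the adversary progress-measure argument for the lower bound, and the dual-SDP-to-span-program-to-phase-estimation pipeline for the upper bound) is an accurate summary of how \cite{hls07,lmrss11,rei11} establish the result, so your treatment matches the paper's and no further comparison is needed.
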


\section{Extended learning graphs}\label{sec:extlg}
% SSSSSSSSSSSSSSSSSSSSSSSSSSSSSSSSSSSSSSSSSSSSSSSSSSSSSSSSSSSSSSSSSSSSSSSSSSSSSSSSSSSSSSSSSSSSSSSSSSSSSSSSSSSSSSSSSSSSSSSSSSSSSSSSSSSSSSSSSSSS %
Consider some fixed Boolean function $f : \ZZ \to \B$, where $\ZZ\subseteq \B^N$.
The set of positive inputs (or instances) will be usually denoted by $\YY=f^{-1}(1)$.
Remind that a \emph{$1$-certificate} for $f$ on $y\in\YY $ is a subset $I\subseteq [N]$ of indices such that $f(z)=1$ for
 every  $z\in\ZZ$ with $z_I=y_I$. % satisfies $f(z)=1$.
%By convention, we will most often call $\ZZ\subseteq [q]^N$ the set of inputs (or instances) and $\YY$ (resp. $\ZZ \backslash \YY$) the set of positive (resp. negative) inputs.

%By convention, we will most often call $\ZZ\subseteq \B^N$ the set of inputs (or instances) and $\YY$ (resp. $\ZZ \backslash \YY$) the set of positive (resp. negative) inputs.
%We now formally define an adaptive learning graph and its complexity.

% subsubsubsubsubsubsubsubsubsubsubsubsubsubsubsubsubsubsubsub %
\subsection{Model and complexity}
Intuitively, learning graphs are simply electric networks of a special type.
The network is embedded in a rooted directed acyclic graph, which
has few similarities with decision trees. %, but also many differences.
Vertices are labelled by subsets $S\subseteq [N]$ of indices. %input positions.
Edges are basically from any vertex labelled by, say, $S$ to any other one labelled $S\cup \{j\}$,
for some $j\not\in S$.
Such an edge can be interpreted as querying the input bit $x_j$, while $x_S$ has been previously learnt. The weight on the edge is its conductance: the larger it is, the more flow will go through it. Sinks of the graph are labelled by potential $1$-certificates of the function we wish to compute.

Thus a random walk on that network starting from the root (labelled by $\emptyset$),
with probability transitions proportional to conductances, will hit a $1$-certificate
with average time proportional to the product of the total conductance by the effective resistance between 
the root of leaves having $1$-certificates~\cite{bcjkm13}.
%The orientation of edges is only meant for 

If weights  are independent of the input, then the learning graph is called \emph{non-adaptative}.
When they depend on previously learned bits, it is \emph{adaptative}.
But in quantum computing, we will see that they can also depend on both the value of the next queried bit and the value of the function itself! We call them \emph{extended learning graphs}.

Formally, we generalize the original model of learning graphs by allowing 
two possible weights on each edge: one for positive instances and one for negative ones. 
Those weights are linked together as explained in the following definition.

%more flexibility in the weight function than the adaptive LG model. 
%For this we replace the weight function by a pair of functions. They are both more general but are also linked together. 
\begin{definition}[Extended learning graph]\label{def:extLGfct}
Let $\YY\subseteq \ZZ$ be finite sets.
An \emph{extended learning graph} $\G$ is $5$-tuple $(\V, \E, \cS, \{w_z^b : z\in \ZZ,b\in\B\}, \{p_y: y \in Y\})$ satisfying
\begin{itemize}
\item $(\V,\E)$ is a directed acyclic graph rooted in some vertex $r\in \V$;
\item $\cS$ is a vertex labelling mapping each  $v \in \V$ to  $\cS(v)\subseteq [N]$ such that
$\cS(r) = \emptyset$ and $\cS(v) = \cS(u)\cup\{j\}$
for every  $(u,v)\in\E$ and 
  some $j\not\in\cS(u)$;
\item Values $w_z^b(u,v)$ are in $\R_{\geq 0}$ and
 depend on $z$ only through $z_{\cS(v)}$, for every $(u,v)\in\E$;
\item $w_x^0(u,v)=w_y^1(u,v)$ for all $x \in \ZZ\setminus\YY, y \in\YY$ and edges $ (u,v) \in \EE$ such that
$x_{\cS(u)}=y_{\cS(u)}$ and $x_{j}\neq y_{j}$ with $\cS(v)=\cS(u)\cup\{j\}$.
\item  $p_y : \E \rightarrow \mathbb{R}_{\geq 0}$ is a unit flow whose source is the root and
such that $p_y(e) = 0$ when $w_y^1(e) = 0$,
for every $y \in \YY$.
\end{itemize}
We say that $\G$ is a \emph{learning graph for} some function $f : \ZZ \rightarrow \B$, when 
$Y=f^{-1}(1)$ and each sink of $p_y$ contains a 
$1$-certificate for $f$ on $y$, for all positive input $y \in f^{-1}(1)$.
\end{definition}

We also say that $\G$ is an \emph{adaptive learning graph} when $w_z^0=w_z^1$ for all $z\in \ZZ$.
If furthermore $w_z^0$ is independent of $z$, we say that $\G$ is a \emph{non-adaptive learning graph}.
In the sequel, unless otherwise specified, by \emph{learning graph} we mean \emph{extended learning graph}.

When there is no ambiguity, we usually define $\cS$ by stating the \emph{label} of each vertex.
We also say that an edge %\red{**on parle parfois d'arc et parfois d'edge. Uniformiser?**} 
$e=(u,v)$ \emph{loads} $j$ when $\cS(v)=\cS(u)\cup\{j\}$.

%by only separating positive and negative weights.% and stages of the learning graph $\G$. By \emph{level} $d$ of $\G$ we refer to the set of vertices at distance $d$ from the root. A \emph{stage} is the set of edges of $\G$ between level $i$ and level $j$, for some $i < j$.
The complexity of  extended learning graphs is  defined similarly to the one of other learning graphs 
by choosing the appropriate weight function for each complexity terms.
\begin{definition}[Extended learning graph complexity]\label{extlg}
Let $\G$ be an extended learning graph for a function $f : \ZZ \rightarrow \B$. 
Let $x\in \ZZ\setminus\YY$, $y\in\YY$,   and $\F \subseteq \E$.
The \emph{negative complexity} of $\F$ on $x$ and the \emph{positive complexity} of $\F$ on $y$ 
(with respect to $\G$) are respectively defined by
$$C^0(\F,x)=\sum_{e \in \F} w_x^0(e) \quad\text{and}\quad
C^{1}(\F,y)= %\max_{y \in f^{-1}(1)} \left( 
\sum_{e \in \F}  \frac{p_y(e)^2}{w_y^1(e)}.$$
%\right).
%\end{equation*}
%The positive complexity of $E$ is $$C^1(E) =\max_{y \in Y} C_{1,y}(E).$$
Then the \emph{negative and positive complexities} of $\F$ are 
$C^0(\F) =\max_{x\in f^{-1}(0)}C^0(\F,x)$ and $C^1(\F)=\max_{y\in f^{-1}(1)}C^1(\F,y)$.
The \emph{complexity} of $\F$ is $C(\F)=\sqrt{C^0(\F)C^1(\F)}$ and
the  \emph{complexity} of $\G$ is $C(\G)= C(\E)$.  
Last, the \emph{extended learning graph complexity} of  $f$, denoted $\ELG(f)$, is the minimum complexity of 
an extended learning graph for $f$.
%	Let $\G$ be a learning graph, $z \in \ZZ$ , $e \in \EE$ and $E \subseteq \EE$ .
%	The negative and positive complexities of $e$ on $x$ (resp. of $E$ on $x$) are respectively equal to the negative weight and the inverse of the positive weight (if $w^1_e(z)>0$) -- that is
%	\begin{align*}
%		C^0(e,x) = w^0_e(x), \quad C^1(e,x) = \frac{1}{w^1_e(z)}
%		\qquad
%		\left( \hbox{resp.} \quad C^0(E,z) = \sum_{e \in E} C^0(e,z), \quad C^1(e,z) = \sum_{e \in E} C^1(e,z) \right).
%	\end{align*}
%	The positive and negative complexities of $e$ are defined as $C^0(e) = \max_{z\in \ZZ \backslash \YY} C^0(e,z)$ and $C^1(e) = \max_{z\in \YY} C^1(e,z)$. We define similarly the positive and negative complexities of $E$, denoted respectively $C^0(E)$ and $C^1(E)$.
%	
%	The complexity of $e$ is $C(e) = \sqrt{C^0(e) C^1(e)}$, and the complexity of $E$ is $C(E) = \sqrt{C^0(E) C^1(E)}$.
%	
%	The complexities of the learning graph $\G$ are the complexities of its set of edges -- that is,
%	\begin{align*}
%		& C^0(\G,z) = C^0(\EE,z), C^1(\G,z) = C^1(\EE,z), \quad \forall z \in \ZZ,
%		\\
%		& C^0(\G) = C^0(\EE), C^1(\G) = C^1(\EE), \hbox{ and } C(\G) = C(\EE).
%	\end{align*}
%	
%	The extended learning graph complexity of a function $f$, denoted $\ELG(f)$ is the minimum complexity of an extended learning graph for $f$.
\end{definition}
Most often we will split a learning graph into \emph{stages} $\F$, that is,
when the flow through $\F$ has the same total amount $1$ for every positive inputs.
This allows us to analyze the learning graph separately on each stage.

As for adaptive learning graphs~\cite{spanCert,BL11}, the extended learning graph complexity is upper bounding the standard query complexity.
\begin{theorem}
\label{thm:ELG-querycomplexity}
%	Assume $\G$ is a learning graph for a function $f : \mathcal D \to \{0,1\}$ with $\mathcal D \subseteq \{0,1\}^n$. Then there exists a bounded-error quantum query algorithm for the same function with complexity $O(C(\G, f))$. So $Q(f) = $O(\ELG(f))$.
For every function $f : \ZZ \rightarrow \B$, we have
$Q(f)=O(\ELG(f))$.  
\end{theorem}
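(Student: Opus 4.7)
The plan is to show, for every extended learning graph $\G$ for $f$, that $\Advpm(f)\le C(\G)$ by constructing an explicit feasible solution $\{X_j\}_{j\in[N]}$ of the SDP of Definition~\ref{def:advpm-bound-sdp} whose objective value is at most $C(\G)$. Minimizing over $\G$ then gives $\Advpm(f)\le\ELG(f)$, and Theorem~\ref{qadv} converts this into the desired bound $Q(f)=O(\ELG(f))$. The construction follows the template used in~\cite{spanCert,BL11} for (non-)adaptive learning graphs, with the adjustments required to accommodate the two weight functions $w_z^0$ and $w_z^1$.

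For each edge $e=(u,v)\in\E$ loading some index $j_e$, associate to every $z\in\ZZ$ a vector $\psi_e(z)\in\R^{\B^{|\cS(u)|}}$ whose only nonzero coordinate is indexed by $z_{\cS(u)}$: this coordinate equals $\lambda\sqrt{w_z^0(e)}$ when $f(z)=0$ and $\lambda^{-1}p_z(e)/\sqrt{w_z^1(e)}$ when $f(z)=1$, with $\lambda:=(C^1(\E)/C^0(\E))^{1/4}$. The convention $0/0=0$ handles the case $w_z^1(e)=0$, for which $p_z(e)=0$ by Definition~\ref{def:extLGfct}. Let $M_e$ be the Gram matrix $M_e[x,y]:=\langle\psi_e(x),\psi_e(y)\rangle$; it is PSD, and so is $X_j:=\sum_{e:j_e=j}M_e$. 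The diagonal entries collapse: for $z$ with $f(z)=0$ one obtains $\sum_j X_j[z,z]=\lambda^2\sum_e w_z^0(e)\le\lambda^2 C^0(\E)=\sqrt{C^0(\E)C^1(\E)}=C(\G)$, and the symmetric computation yields the same bound $C(\G)$ when $f(z)=1$. Hence the SDP objective is at most $C(\G)$.

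The main obstacle is verifying the linear constraint $\sum_{j:x_j\neq y_j}X_j[x,y]=1$ for every negative $x$ and positive $y$. An edge $e=(u,v)$ contributes to $X_{j_e}[x,y]$ only when $x_{\cS(u)}=y_{\cS(u)}$, in which case its contribution is $\sqrt{w_x^0(e)/w_y^1(e)}\,p_y(e)$; and among these edges, only the ones additionally satisfying $x_{j_e}\neq y_{j_e}$ feed into $\sum_{j:x_j\neq y_j}X_j[x,y]$. On precisely these edges the hypothesis $w_x^0(e)=w_y^1(e)$ of Definition~\ref{def:extLGfct} applies, simplifying each contribution to $p_y(e)$. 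The remaining step is to decompose the unit flow $p_y$ into simple paths from the root to its sinks: since every sink carries a $1$-certificate of $y$ and $f(x)=0$, each path must contain at least one edge along which $x$ and $y$ first disagree, i.e.\ a boundary edge as above; by flow conservation the total $p_y$-mass crossing this boundary equals the $p_y$-mass leaving the root, namely $1$. This yields feasibility with objective at most $C(\G)$, and invoking Theorem~\ref{qadv} completes the proof.
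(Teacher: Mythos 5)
Your proof is correct and follows essentially the same route as the paper's: it builds a feasible solution to the adversary SDP of Definition~\ref{def:advpm-bound-sdp} from the learning graph, using Gram/block matrices indexed by assignments to $\cS(u)$, a rebalancing of positive versus negative complexities, and the cut of first-disagreement edges carrying total flow $1$. The only cosmetic differences are that you use a single vector per input per edge instead of the paper's split into $\psi_0,\psi_1$ according to the value of the loaded bit (harmless here, since the SDP only constrains entries with $x_j\neq y_j$), and that you spell out the flow-decomposition argument for $\sum_{e\in\F}p_y(e)=1$, which the paper states without proof.
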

\begin{proof}
%(see the proofs of Theorems 5 and 7 in Belovs' paper for $k$-disctinctness)
We assume that $f$ is not constant, otherwise the result holds readily. The proof follows the lines of the analysis of the learning graph for Graph collision in~\cite{bel12}.
We already know that $Q(f)=O(\Advpm(f))$ by Theorem~\ref{qadv}.
Fix any extended learning graph $\G$ for $f$.
Observe from Definition~\ref{def:advpm-bound-sdp} that $\Advpm(f)$ is defined by a minimization problem.
Therefore finding any feasible solution  with objective value $C(\G, f)$ would conclude the proof.
Without loss of generality, assume that $C^0(\G)=C^1(\G )$ (otherwise we can
multiply all weights by $\sqrt{C^1(\G )/C^0(\G )}$).
Then both complexities become $\sqrt{C^0(\G )C^1(\G )}$ and the total complexity remains $C(\G)$.

	For each edge $e=(u,v)\in \E$ with $\cS(v)=\cS(u) \cup \{j\}$, define a block-diagonal matrix $X_j^e = \sum_\alpha (Y_j^e)_\alpha$, where the sum is over all possible assignments $\alpha$ on $\cS(u)$. Each $(Y_j^e)_\alpha$ is defined as $(\psi_0\psi_0^* + \psi_1\psi_1^*)$, where  for each $z \in \{0,1\}^n$ and $b\in\B$
$$
	\psi_b[z] =
	\begin{cases}
		p_e(z) / \sqrt{w_z^1(e)} &\hbox{if $z_{\cS(u)}=\alpha$, $f(z) =1$ and $ z_j = 1-b$,}
%		p_e(x) / \sqrt{w_e^1(z_S,z_j)} &\hbox{if $z$ satisfies $\alpha$ and $f(z) = z_j = 1$}
		\\
		\sqrt{w_z^0(e)} &\hbox{if $z_{\cS(u)}=\alpha$, $f(z) =0$ and $z_j = b$,}
		\\
		0  & \hbox{otherwise.}
	\end{cases}$$
%	\\
%	\hbox{and}\quad &\phi[z] =
%	\begin{cases}
%		p_e(z) / \sqrt{w_e^1\left(z_S,z_j\right)} &\hbox{if $z$ satisfies $\alpha$ and $f(z)=1, z_j = 0$}\\
%%		p_e(x) / \sqrt{w_e^0\left(z_S,z_j\right)} &\hbox{if $z$ satisfies $\alpha$ and $f(z)=1, z_j = 0$}
%		\\
%		\sqrt{w_e^0\left(z_S,z_j\right)} &\hbox{if $z$ satisfies $\alpha$ and $f(z)=0, z_j = 1$}
%		\\
%		0  & \hbox{otherwise}
%	\end{cases}
%($\psi$ and $\phi$ depend implicitly on $j, e$ and $\alpha$).

Define now $X_j = \sum_e X_j^e$ where the sum is over all edges $e$ loading $j$. Fix any $x \in f^{-1}(0)$ and $y \in f^{-1}(1)$. Then we have
$X_j^e[x,x] = 		w_x^0(e)$ and
$X_j^e[y,y] = (p_e(y))^2 / w_y^1(e)$.
So the objective value is
\begin{align*}
	\max_{z \in \{0,1\}^n} \sum_{j\in[n]} X_j [z,z]
	&= \max\left\{ \max_{x \in f^{-1}(0)} \sum_{j} X_j [x,x] , \max_{y \in f^{-1}(1)} \sum_{j} X_j [y,y] \right\} \\
	&= \max\left\{ C^0(\G ) ,  C^1(\G ) \right\} = C(\G). %\\
%	&= C(\G), \quad\text{since we have assumed that $C^0(\G ) =C^1(\G ) $.}
\end{align*}

Fix now $x \in f^{-1}(0)$ and $y \in f^{-1}(1)$. 
Consider the cut $\F$  over $\G$ of edges $(u,v)\in\E$
such that $\cS(v)=\cS(u) \cup \{j\}$ and $x_{\cS(u)} = y_{\cS(u)}$ but $x_j \neq y_j$. 
Then each edge $e\in\F$ loading $j$ satisfies $w_x^0(e) = w_y^1(e)$
and therefore
$X_j^e[x, y]  =
		p_e(y)$. % &\hbox{ if } x_S = y_S, x_j \neq y_j 
%\end{align*}
Thus, 
$
	\sum_{{j :\, x_j \neq y_j}} \; X_j [x,y] = \sum_{e\in \F} p_e(y)=1
$.
Hence  constraints of Definition~\ref{def:advpm-bound-sdp} are satisfied.
\end{proof}

%\section{Super edges}
% subsubsubsubsubsubsubsubsubsubsubsubsubsubsubsubsubsubsubsub %
\subsection{Compression of learning graphs into super edges}

We will simplify the presentation of our learning graphs by introducing a new type of edge
encoding specific learning graphs as sub-procedures.
%The terminology is justified by the fact that a super edge can be embedded in a larger LG as a normal edge, 
Since an edge has a single `exit', we can only encode learning graphs whose flows have unique sinks.
%Since any flow going through such between two nodes (the amount of flow that enters the super edge is the same as the amount that exits through its sink). Furthermore, a (normal) edge is a particular type of super edge: it is a LG reduced to two nodes and a single edge.

\begin{definition}[Super edge]\label{def:super}
	A \emph{super edge} $e$ is an extended learning graph $\G_e$ such that each possible flow has the same unique sink.
%Since a super edge is just a particular type of LG, its complexities are the complexities of the underlying LG: given a super edge $\mathbf e$, 
Then its \emph{positive} and \emph{negative edge-complexities} on input $x\in\ZZ\setminus\YY$ and $y\in\YY$ are respectively $c^0( e, x)=C^0(\G_e, x)$ and $c^1( e, y)=C^1(\G_e, y)$.
\end{definition}
Since an edge is also a super edge with unit flow, we will also use the notion of positive and negative edge-complexities for edges for simplifying further notations.
Observe that in particular $C^0(e,x)=c^0(e,x)$ and $C^1(e,y)=p_y(e)^2\times c^1(e,y)$,
which we now use in order to define the complexity of learning graphs with super edges.

Indeed, one can now consider learning graphs with super edges.
They are equivalent to learning graphs without super edges by doing recursively
the following replacement for each super edge:
(1) replace it by its underlying learning graph, plugging the root to
all incoming edges and the unique flow sink to all outgoing edges; 
(2) root the incoming flow accordingly to the plugged learning graph.
%
%This observation is the basis case for defining the complexity of a learning graphs with super edges.
Let us call this learning graph the \emph{expansion} of the original one with super edges.
Then, a direct inspection leads to the following result that we will use in order to compute complexities directly on our original learning graphs.
\begin{lemma}\label{expansion}
Let $\G$ be a learning graph with super edges for some function $f$. 
Then the expansion of $\G$ is also a learning graph for $f$.
Moreover, let $\exp(\F)$ be the expansion of $\F\subseteq\E$.
Then $\exp(\F)$ has positive and negative complexities 
$$C^0(\exp(\F),x)=\sum_{e \in \F} c^0(e,x) \quad\text{and}\quad
C^1(\exp(\F),y)= %\max_{y \in f^{-1}(1)} \left( 
\sum_{e \in \F}  p_y(e)^2\times c^1(e,y).$$
\end{lemma}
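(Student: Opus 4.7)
The proof is by induction on the number of super edges in $\G$, the base case being vacuous since $\exp(\G)=\G$ when no super edges are present. For the inductive step I would fix a single super edge $e=(u_1,u_2)\in\E$ with underlying learning graph $\G_e=(\V_e,\E_e,\cS_e,\{w^b_z\},\{\tilde p_y\})$, root $r_e$ and common flow sink $t_e$, and analyse the single-step replacement obtained by identifying $r_e$ with $u_1$, identifying $t_e$ with $u_2$, and shifting every internal label $\cS_e(v)$ to $\cS(u_1)\cup\cS_e(v)$. This shift is well defined because $\cS_e(r_e)=\emptyset$, and it preserves the one-index-per-edge growth property of labels along each internal edge.

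To establish the first claim I would verify each item of Definition~\ref{def:extLGfct} for the single-step result. The DAG property is preserved because $\G_e$ is itself a DAG whose only interface vertices with the ambient graph are $r_e$ and $t_e$. The $z$-locality of internal weights transfers because an internal weight that depends on $z_{\cS_e(v)}$ in $\G_e$ depends on $z_{\cS(u_1)\cup\cS_e(v)}$ after the shift, i.e.\ on the same coordinates of $z$. The boundary identity $w_x^0=w_y^1$ on transition edges transfers by the same reasoning, since the set of admissible pairs $(x,y)$ in the expanded graph is a subset of the one for which the identity holds inside $\G_e$. For the flow, any positive input $y$ sends $p_y(e)$ units through the super edge, and after expansion this is routed through $\G_e$ as $p_y(e)\cdot\tilde p_y$, which preserves flow conservation at $u_1$ and $u_2$ and produces a valid unit flow on the whole expanded graph whose sinks — unchanged from $\G$ — still carry $1$-certificates.

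For the second claim the key computation tracks how the flow $p_y(e)$ through the super edge scales its internal unit flow. The negative contribution of the expanded block is independent of any flow,
\begin{equation*}
\sum_{e'\in\E_e} w^0_x(e')\;=\;C^0(\G_e,x)\;=\;c^0(e,x),
\end{equation*}
while the positive contribution is
\begin{equation*}
\sum_{e'\in\E_e} \frac{\bigl(p_y(e)\,\tilde p_y(e')\bigr)^2}{w^1_y(e')}
\;=\;p_y(e)^2\sum_{e'\in\E_e}\frac{\tilde p_y(e')^2}{w^1_y(e')}
\;=\;p_y(e)^2\,c^1(e,y).
\end{equation*}
Summing over all super and ordinary edges in $\F$, and using the remark just after Definition~\ref{def:super} that ordinary edges already satisfy $C^0(e,x)=c^0(e,x)$ and $C^1(e,y)=p_y(e)^2\,c^1(e,y)$, yields the stated identities and closes the induction. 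The only subtle bookkeeping point — the relabelling of the internal vertices of $\G_e$ — is harmless because every weight and every flow inside $\G_e$ is defined in terms of the \emph{set} of queried indices, so the shift commutes with both the weight-dependency constraint and the two complexity sums.
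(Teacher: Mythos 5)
Your proposal is correct and is precisely the ``direct inspection'' the paper invokes without writing out: the paper states this lemma with no proof, and your single-step replacement argument --- shifting internal labels by $\cS(u_1)$, checking each item of Definition~\ref{def:extLGfct}, and observing that routing $p_y(e)$ units of flow through the internal unit flow $\tilde p_y$ produces the factor $p_y(e)^2$ in the positive complexity while leaving the negative complexity flow-independent --- is exactly the intended verification. The only cosmetic remark is that since the paper allows the replacement to be applied recursively (a super edge's underlying graph may itself contain super edges), your induction is more cleanly phrased on the nesting depth rather than on the raw count of super edges, but this does not affect the substance of the argument.
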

%When it is clear from the context
%Since an edge is also a super edge, we will also use the notion of positive and complexities for edges without any 

%\begin{definition}[complexity of a super edge]
%	Given a super edge $\mathbf e$, its \emph{positive} and \emph{negative complexities} on input $z \in \{0,1\}^n$ are respectively $C^0(\mathbf e, z)$ and $C^1(\mathbf e, z)$, the complexities of the underlying learning graph $\G$.
%\end{definition}

Fix some stage $\F\subseteq\E$ of $\G$, that is such that
the flow through $\F$ has the same total amount $1$ for every positive inputs.
We will use the following lemma, that we adapt from non-adaptive learning graphs, to assume that a learning graph has positive complexity at most $1$ on $\F$. The parameter $T$ involved in this result is usually called the \emph{speciality} of $\F$. 
%Most often $\F$ will correspond to a stage, and therefore its total amount of flow will be $1$.
\begin{lemma}[Speciality~\cite{bel12}]\label{lem:specExtended}
Let $\G$ be a learning graph for a function $f : \ZZ \rightarrow \B$.
%Consider a learning graph $\G = (\VV, \EE, w, C, \{p_y\,:\, y\in\YY\})$. Denote $C^0$ and $C^1$ its negative and the positive edge complexities.
Let $\F\subseteq \E$ be a stage of $\G$ %such that
%Let $\F$ be a stage of $\G$ such that:
whose flow % for every positive inputs%through $\F$ %has total amount $1$,
always uses the same ratio $1/T$ of transitions and is uniform on them. %those transitions.
Then there are alternative weights for edges in $\F$ such that the
new weighted edges in $\widetilde{\F}$ satisfy
for every $x\in f^{-1}(0)$ and $y\in f^{-1}(1)$ %the new complexities satisfy
%$\G$ which coincides with $w^b$ outside $\F$
%such that, denoting $\tilde C^0$ and $\tilde C^1$ the negative and the positive edge complexities in $\tilde \G = (\VV, \EE, \tilde w, \{p_y\,:\, y\in\YY\})$,
%Let $C^0$ and $C^1$ be the negative and the positive complexities of the edges in $\F$. They can be rescaled such that the complexities of $\F$ become 
$${C}^0(\widetilde\F, x) \leq T \Exp_{e\in\F}\left[ c^0(e,x) c^1(e) \right] \quad\text{and}\quad %\text{for every $x\in f^{-1}(0)$,}\\
	%\text{and}\quad 
	{C}^1(\widetilde\F, y) \leq 1. %, \qquad \text{for every $y\in f^{-1}(1)$.}
$$
\end{lemma}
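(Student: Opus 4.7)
The plan is to define $\widetilde{\F}$ by applying a single scalar rescaling $\alpha_e$ to each edge $e \in \F$, then to read off the two inequalities by a short direct calculation via \textbf{Lemma~\ref{expansion}}. Intuitively this is the weighted AM--GM manoeuvre familiar from the non-adaptive speciality lemma of~\cite{bel12}, but it has to be re-checked in the extended model, where $w^0$ and $w^1$ are two separate ingredients rather than reciprocals of a single weight.

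First I would extract the per-edge flow value. Setting $M = |\F|$, the hypothesis says that for every positive $y$ exactly $M/T$ edges of $\F$ carry the flow $p_y$, all with the same value; since the stage carries unit flow, that common value must be $p_y(e) = T/M$ on its support $\F_y$, which has size $M/T$. The rescaling to use is then $\alpha_e := T \cdot c^1(e)/M$, where $c^1(e) := \max_y c^1(e,y)$: for a plain edge this means multiplying both $w^0_z(e)$ and $w^1_z(e)$ by $\alpha_e$, and for a super edge it means scaling every internal weight of $\G_e$ by $\alpha_e$. Because both weights receive the same factor on every edge, the extended-model boundary identity from \textbf{Definition~\ref{def:extLGfct}} is preserved edge by edge, so $\widetilde{\F}$ sits inside a valid extended learning graph for $f$; moreover the edge-complexities transform as $\widetilde{c}^0(e,x) = \alpha_e\, c^0(e,x)$ and $\widetilde{c}^1(e,y) = c^1(e,y)/\alpha_e$.

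Plugging these into \textbf{Lemma~\ref{expansion}} then produces both bounds by inspection. The positive complexity collapses to $C^1(\widetilde{\F}, y) = (T/M) \sum_{e \in \F_y} c^1(e,y)/c^1(e)$, which is at most $(T/M)(M/T) = 1$ since each ratio is $\leq 1$ and the sum has $|\F_y| = M/T$ terms. Symmetrically, the negative complexity is $C^0(\widetilde{\F}, x) = (T/M) \sum_{e \in \F} c^0(e,x)\, c^1(e)$, which is exactly $T \Exp_{e \in \F}[c^0(e,x) c^1(e)]$, as claimed.

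The main obstacle sits in the middle paragraph, not in the arithmetic of the third: in the extended model one is not free to choose $w^0$ and $w^1$ independently, the flow values $p_y$ are fixed, and the only available lever is the joint rescaling $\alpha_e$. The delicate part is to check that this single factor simultaneously preserves the boundary identity of \textbf{Definition~\ref{def:extLGfct}}, transfers the entire $c^0 c^1$ budget into $C^0$ while pinning $C^1$ to $1$, and does so uniformly for plain edges and super edges (through scaling of $\G_e$). Once that invariance is in hand, the remainder is a one-line evaluation.
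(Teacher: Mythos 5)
Your proof is correct and follows essentially the same route as the paper's: your $\alpha_e = T\,c^1(e)/M$ is exactly the paper's $\lambda_e = c^1(e)/n_{\mathrm{used}}$ (with $M=n_{\mathrm{total}}$ and $M/T=n_{\mathrm{used}}$), applied as a joint rescaling of $w^0$ and $w^1$, and the two complexity computations match. The only addition is your explicit check that the common factor preserves the boundary identity of Definition~\ref{def:extLGfct}, which the paper leaves implicit.
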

%We can also get Lemma~\ref{lem:specExtended-balanced}, with balanced complexities.
\begin{proof}
Let $n_\mathrm{total}$ be the number of transitions in $\F$ and $n_\mathrm{used}$ the number of them used by each flow (i.e. with positive flow).
Therefore $T= n_\mathrm{total}/n_\mathrm{used}$. 
By assumption, the flow on each edge is either $0$ or $1/n_\mathrm{used}$.
%, for some $q>0$.
For each edge $e$ in $\F$, let $\lambda_e = {c^1(e)}/ n_\mathrm{used}$. 
%where we remind that $C^1(e)= \max_{y\in f^{-1}(1)} p_y(e)^2/w^1_y(e)$. %C^1(e,y)$. 
For every input $z$, we multiply $w^b_z(e)$ by $\lambda_e$, % w^b_z(e)$.
%and  $w^1_z(e)$ to $\lambda_e w^0_z(z)$
and we name $\widetilde\F$ the set $\F$ with the new weights.

Then for any $x \in f^{-1}(0)$,
$
	 C^0(\widetilde\F,x) = \sum_{e\in\widetilde\F}  \lambda_e c^0(e,x) = T \Exp_{e\in\widetilde\F}\left[ c^0(e,x) c^1(e) \right]$.
Similarly, for any $y \in f^{-1}(1)$,
%$\tilde C^1(\widetilde\F,y) = \sum_e \frac{1}{ \lambda_e} \frac{1}{(n_\mathrm{used})^2} C^1(e,y) \leq 1$ 
$C^1(\widetilde\F,y) = \sum_{e\in\widetilde\F} {p_y(e)^2} c^1(e,y) / { \lambda_e}
\leq %\sum_{e\in\widetilde\F} p_y(e)^2 n_\mathrm{used} =
 1
$,
since terms in the sum are positive only for edges with positive flow.
\end{proof}

\subsection{Loading sparse inputs}

We study a particular type of super edges, that we will use repeatedly in the sequel.
%\begin{definition}[LEARN]
%	IDEA : given a set $S$, $LEARN(S,S \cup S')$ : load the elements in $S$.
%\end{definition}
	To construct a learning graph for a given function, one often needs to load a subset $S$ of the labels. This can be done by a path of length $|S|$ with negative and positive complexities $|S|$,
	which, after some rebalancing, leads directly to the following lemma.
%	or after rebalancing nega
\begin{lemma}\label{lem:complexityLearnAll}
	For any set $S$, there exists a super edge denoted $\Loadall_S$ loading $S$ with the following complexities for any input $z \in \ZO^N$:
		$$
		c^0(\Loadall_S, z)  = |S|^2  \quad\text{and}\quad
		c^1(\Loadall_S, z) = 1 .
		$$
%	where $|S|$ denote the cardinal of $S$. So the total complexity is $C(\Loadall_S, z) = |S|$.
\end{lemma}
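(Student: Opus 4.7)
The plan is to exhibit $\Loadall_S$ explicitly as a simple path of length $|S|$ loading the elements of $S$ one at a time, then to tune the (uniform) weights so that the two target complexities come out as stated.

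Concretely, enumerate $S = \{j_1, \dots, j_k\}$ where $k = |S|$. I build the extended learning graph $\G_e$ underlying the super edge as a directed path $r = v_0 \to v_1 \to \cdots \to v_k$ with vertex labels $\cS(v_i) = \{j_1, \dots, j_i\}$; in particular the root has empty label and there is a unique sink $v_k$ carrying the whole set $S$. Weights are chosen to be input-independent (so the construction is in fact non-adaptive and $w^0 = w^1$): on each edge $e_i = (v_{i-1}, v_i)$ I set $w_z^b(e_i) = k$ for every $z, b$. The flow $p_y$ for any relevant $y$ is forced to be the unique unit flow from $r$ to $v_k$, taking value $1$ on every edge, which trivially satisfies $p_y(e) = 0$ whenever $w_y^1(e) = 0$ (the latter never happens here).

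With these choices the two complexities are immediate arithmetic:
\begin{align*}
 c^0(\Loadall_S, z) &= \sum_{i=1}^{k} w_z^0(e_i) = k \cdot k = |S|^2, \\
 c^1(\Loadall_S, z) &= \sum_{i=1}^{k} \frac{p_y(e_i)^2}{w_z^1(e_i)} = k \cdot \frac{1}{k} = 1.
\end{align*}
Both numbers are independent of $z$, as required. Since $\G_e$ is a path with a single sink that is reached by every flow, it satisfies Definition~\ref{def:super}, so $\Loadall_S$ is indeed a valid super edge loading $S$.

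There is really no obstacle: the only mild point is to observe that the weights on the path must be rebalanced (taking $w = k$ rather than $w = 1$) in order to equalize the positive and negative contributions up to the factor $|S|$ stated in the lemma; with weight $1$ one would get $c^0 = c^1 = k$, whereas the factor $k$ on each edge shifts the cost entirely onto the negative side, matching the claim.
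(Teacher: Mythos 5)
Your construction is exactly the paper's: a path of length $|S|$ loading the elements one at a time with constant weight $|S|$ on every edge (for both $w^0$ and $w^1$), which makes the arithmetic $c^0 = |S|\cdot|S|$ and $c^1 = |S|\cdot\tfrac{1}{|S|}$ immediate. The proof is correct and takes essentially the same approach as the paper, including the rebalancing remark that the lemma's preamble already hints at.
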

\iffalse\begin{proof}
	Let us assume for simplicity that $N = |S|$ and denote $S = \{1, \dots, N\}$. We define the learning graph $\Loadall_S$ as the sequence of edges $e_1 = (\emptyset, \{1\}), e_2 = (\{1\},\{1,2\}), e_N = ([N-1],|S|)$. The weights are defined as:
	$$
		w_{e_k}^0 = N
		\quad
		\hbox{ and }
		\quad
		w_{e_k}^1 = N.
	$$
	They satisfy the requirements of Definition~\ref{def:extLGfct}.
	We obtain the values of the complexities by using their definition.
\end{proof}
\fi

%	The goal is then to design an efficient super edge loading $S$.	 Let us first point out that there is a straightforward (non-adaptive) learning graph to load the elements of $S$.

When the input is sparse %(i.e., has a low Hamming weight) on $|S|$, 
one can do significantly better as we describe now, where $|z_S|$ denotes the Hamming weight of $z_S$.
\begin{lemma}\label{lem:complexityLearn}
	For any set $S$, there exists a super edge denoted $\Loadsparse_S$ loading $S$ with the following complexities for any input $z \in \ZO^N$:
%	$$
%		C^0(\Loadsparse_S,z)  \leq
%		O(|S| \cdot \log(|S|) \cdot (|z_S|+1)),
%%		\begin{cases}
%%			O(|S| \cdot \log(|S|)) &\hbox{ if } |z_S| = 0 \\
%%			O(|S| \cdot (|z_S|+1) \cdot \log(|S|)) &\hbox{ otherwise,}
%%		\end{cases}
%		\qquad
%		C^1(\Loadsparse_S,z) \leq O(1)
%	$$
%	{\color{red} OR RATHER (to deal with the case where $|S|=1$, hence $\log|S|=0$):
		$$
		c^0(\Loadsparse_S,z)  \leq
		6 |S|  (|z_S|+1) \log(|S|+1) \quad\text{and}\quad
%		\begin{cases}
%			O(|S| \cdot \log(|S|)) &\hbox{ if } |z_S| = 0 \\
%			O(|S| \cdot (|z_S|+1) \cdot \log(|S|)) &\hbox{ otherwise,}
%		\end{cases}
		\quad
		c^1(\Loadsparse_S,z) \leq 1.
	$$
%	}
\end{lemma}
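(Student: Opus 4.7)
My plan is to realize $\Loadsparse_S$ as a single path of $|S|$ edges that loads the elements of $S$ in a fixed canonical order $j_1,\dots,j_{|S|}$, while exploiting the extended-learning-graph flexibility $w^0\neq w^1$ to assign weights that depend adaptively on both the number of $1$-bits observed so far and on the value of the bit just loaded.

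Concretely, at edge $e_i$ going from $T_{i-1}=\{j_1,\dots,j_{i-1}\}$ to $T_i=T_{i-1}\cup\{j_i\}$, write $k=|z_{T_{i-1}}|$; this is determined by $z_{\cS(v)}=z_{T_i}$ together with the fixed canonical order, so the dependence is admissible under Definition~\ref{def:extLGfct}. I would set $w^0_z(e_i)=a(k)$ when $z_{j_i}=0$ and $w^0_z(e_i)=b(k)$ when $z_{j_i}=1$, and define $w^1_z(e_i)$ symmetrically by swapping the roles of $0$ and $1$ so that the compatibility condition $w^0_x(e_i)=w^1_y(e_i)$ (for $x_{T_{i-1}}=y_{T_{i-1}}$ and $x_{j_i}\neq y_{j_i}$) is automatic. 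A natural tuning is $a(k)=c\max(1,k)$ with $c=\Theta(\log(|S|+1))$ and $b(k)=2(|S|-k)$; the flow $p_y$ is the unit flow along the single path.

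For the positive bound I would split the $|S|$ contributions along the chain into the $|y_S|$ edges loading a $1$-bit and the $|S|-|y_S|$ edges loading a $0$-bit. The $1$-bit edges contribute $\sum_{m=0}^{|y_S|-1}1/a(m)\leq(2+\log|S|)/c$, a truncated harmonic sum, while each $0$-bit edge contributes at most $1/(2(|S|-|y_{T_{i-1}}|))\leq 1/(2(|S|-|y_S|))$, so they sum to at most $1/2$. Choosing $c$ slightly larger than $2(2+\log|S|)$ (for instance $c=4\log(|S|+1)$) makes each of the two contributions at most $1/2$, yielding $c^1\leq 1$. For the negative bound, the $1$-bit contribution telescopes as $\sum_{m=0}^{|z_S|-1}2(|S|-m)\leq 2|S|\cdot|z_S|$, while the $0$-bit contribution is at most $(|S|-|z_S|)\cdot a(|z_S|)\leq c\cdot|S|\cdot\max(1,|z_S|)$ since $a$ is nondecreasing. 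Summing the two parts and inserting $c=4\log(|S|+1)$ absorbs the residual $2|S|\cdot|z_S|$ into the dominant $c\cdot|S|(|z_S|+1)$ term and yields the advertised bound $6|S|(|z_S|+1)\log(|S|+1)$.

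The main obstacle will be calibrating the constant $c$ so that the final bound is exactly $6|S|(|z_S|+1)\log(|S|+1)$ and not a larger absolute constant, since the harmonic-sum analysis and the telescoping of the $1$-bit contribution introduce competing lower-order terms; in particular the boundary cases $|z_S|=0$ and $|z_S|=|S|$ must be checked separately to make sure the choice of $a$ at $k=0$ behaves correctly. Verifying the compatibility condition on each edge, although essential, follows immediately from the symmetric definition of $a$ and $b$.
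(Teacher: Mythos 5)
Your proposal is correct and follows essentially the same route as the paper: a single path loading $S$ in a fixed canonical order, with extended weights that grow linearly in the number of ones seen so far on ``matching'' edges (so the positive complexity is a truncated harmonic sum of order $\log(|S|)/c$) and are large on ``mismatched'' edges, the compatibility condition $w^0_x=w^1_y$ holding by the symmetric swap of roles. The only cosmetic difference is your choice $b(k)=2(|S|-k)$ for the mismatch weight where the paper uses the constant value $3|S|\log(|S|+1)$ throughout; both choices satisfy the constraints of Definition~\ref{def:extLGfct} and yield the stated bounds, modulo the same small-$|S|$ boundary checks that the paper's own constants also require.
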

\begin{proof}
	Let us assume for simplicity that $N = |S|$ and  $S = \{1, \dots, N\}$. We define the learning graph $\Loadsparse_S$ as the path through edges $e_1 = (\emptyset, \{1\})$,  $e_2 = (\{1\},\{1,2\})$, $\dots$, $e_N = (\{1,\dots,N-1\},S)$. The weights are defined as, for $b\in\{0,1\}$ and $z\in\ZZ$,
$$
		w_{e_k}^b(z) = 
		\begin{cases}
			3\cdot  (|z_{[j-1]}| +1) \cdot \log(N+1) &\hbox{if } z_j = b, \\
			3 N \cdot \log(N+1) &\hbox{if } z_j = 1-b,
		\end{cases}
	$$
\iffalse $$
		w_{e_k}^0(z) = 
		\begin{cases}
			3\cdot  (|z_{[j-1]}| +1) \cdot \log(N+1) &\hbox{if } z_j = 0 \\
			3 N \cdot \log(N+1) &\hbox{if } z_j = 1,
		\end{cases}
	$$
	and
	$$
		w_{e_k}^1(z) = 
		\begin{cases}
			3 N \cdot \log(N+1) &\hbox{if } z_j = 0 \\
			3\cdot  (|z_{[j-1]}| +1) \cdot \log(N+1) &\hbox{if } z_j = 1.
		\end{cases}
	$$\fi	
	When $|z|>0$, let us denote $i_0 = 0$, $i_{|z|+1}=N+1$ and $(i_k)_{k=1, \dots, |z|}$ the increasing sequence of indices $j$ such that $z_j = 1$. 
	%, that is, for all $1 \leq k \leq |z|$, $z_{i_k} = 1$ and $i_{k-1} < i_{k}$. 
	Then, for $k=1, \dots, |z|+1$, we define $m_k$ as the number of indices $j\in(i_{k-1},i_k)$ such that $z_j = 0$. More precisely, $m_k = i_{k} - i_{k-1}-1$ for $1 \leq k \leq |z|$ and $m_{|z|+1} = N - i_{|z|}$. So $\sum_{k = 1}^{|z|+1} m_k = N - |z|$.
	Then, for any input $z$,
	\begin{align*}
		C^0(\Loadsparse_S, z)
		&=
		\begin{cases}
			3N \cdot \log(N+1) & \text{ if } |z| = 0,\\
			3\cdot  \Big( |z| N + \sum_{i=1}^{|z|+1} i \times m_i \Big) \cdot \log(N+1) & \hbox{otherwise,}
		\end{cases}
%		\\
%		&\leq
%		6N  \cdot (|z|+1) \cdot \log(N+1).
	\end{align*}
	which is bounded above by $6N  \cdot (|z|+1) \cdot \log(N+1)$.
	Moreover, using $\sum_{i=1}^{|z|+1} \frac{1}{i} \leq \log(|z|+1)+1$, we get
	\begin{align*}
		C^1(\Loadsparse_S, z)
		=
			\frac{1}{3 \cdot \log(N+1)} \left((N-|z|)\frac{1}{N} + \sum_{i=1}^{|z|+1} \frac{1}{i} \right)
		\leq 1.
	\end{align*}
\end{proof}
%We can also get balanced complexities, see Lemma~\ref{lem:complexityLearn-balanced}.

% subsubsubsubsubsubsubsubsubsubsubsubsubsubsubsubsubsubsubsub %
%\subsection{Notations}

% subsubsubsubsubsubsubsubsubsubsubsubsubsubsubsubsubsubsubsub %
\section{Composition of learning graphs}
\label{subsec:decomp-lemmas}

To simplify our presentation, we will use the term \emph{empty transition} for an edge between two vertices representing the same set. 
They carry zero flow and weight, and they do not contribute to any complexity.
% them to obtain an actual learning graph.

%\label{subsec:graphical-notations-LG}

%We introduce two learning graphs and some graphical notations to simplify the presentation.
%The complexities of these learning graphs are analyzed in~\S~\ref{subsec:decomp-lemmas}.

% PPPPPPPPPPPPPPPPPPPPPPPPPP %
\subsection{Learning graph for OR}
Consider $n$ Boolean functions $f_1,\ldots,f_n$ with respective learning graphs $\G{}_1,\ldots,\G{}_n$.
The following lemma explains how to design a learning graph $\GOR$ for $f=\bigvee_{i\in[n]} f_{i}$
whose complexity is the squared mean of former ones.
We will represent $\GOR$ graphically as\\
\centerline{\begin{tikzpicture}
\node (A) at (0,0){$\emptyset$};
\node (B) at (2,0){$\G{}_{i}$};
\draw[->,>=latex] (A) -- (B) node[above,midway,sloped] (C){$i$};
\end{tikzpicture}}
This result is similar to the one of~\cite{amb10}, where
a search procedure is designed for the case of variable query costs,
or equivalently for a search problem divided into subproblems with variable complexities.

%This result relies on the learning graph $\GOR$ presented in~\S~\ref{subsec:graphical-notations-LG}.
\begin{lemma}\label{groverExtended}
Let $\G_1,\ldots,\G_n$ be learning graphs for Boolean functions $f_{1},\ldots,f_{n}$ over $\ZZ$.
%where each $f_i$ can be computed by a learning graph $\G_i$.
Assume further that for every $x$ such that $f(x)=1$, there is at least $k$ functions $f_{i}$ such that $f_{i}(x)=1$.
Then there is a learning graph $\G$ for $f=\bigvee_{i\in[n]} f_{i}$ such that
for every  $z\in\ZZ$
$$
	\begin{cases}
		\displaystyle C^0(\G, z)\leq {\frac{n}{k}} \times \Exp_{i\in[n]}  \left(C^0(\G_i,z) C^1(\G_i)\right)& \text{when $f(z)=0$},\\
		\displaystyle C^1(\G, z)\leq 1&  \text{when $f(z)=1$}.
	\end{cases}
$$
\end{lemma}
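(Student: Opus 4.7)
The plan is to build $\G$ from a single root $r$ with $\cS(r)=\emptyset$ by attaching $n$ super edges $e_{1},\ldots,e_{n}$, each $e_{i}$ encoding the learning graph $\G_{i}$. By Definition~\ref{def:super}, the edge-complexities are then $c^{0}(e_{i},z)=C^{0}(\G_{i},z)$ and $c^{1}(e_{i})=C^{1}(\G_{i})$, and every sink of $e_{i}$ contains a $1$-certificate for $f_{i}$, hence also for $f=\bigvee_{i}f_{i}$. So $\G$ is a valid learning graph for $f$.

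For each positive input $y$, use the hypothesis to fix an arbitrary set $I(y)\subseteq\{i:f_{i}(y)=1\}$ of cardinality exactly $k$. Define the unit flow $p_{y}$ on $\G$ by setting $p_{y}(e_{i})=1/k$ for $i\in I(y)$ (and routing inside $e_{i}$ according to $\G_{i}$'s own unit flow, scaled by $1/k$), and $p_{y}(e_{i})=0$ otherwise. This yields a valid unit flow whose sinks carry $1$-certificates for $f$, and the total flow through the outer layer of super edges is $1$ for every positive input, so this layer forms a stage in the sense of Lemma~\ref{lem:specExtended}.

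The last step is to apply Lemma~\ref{lem:specExtended} to the stage $\F=\{e_{1},\ldots,e_{n}\}$. Every positive flow uses exactly $k$ of the $n$ outgoing transitions, uniformly with weight $1/k$ each, so the speciality is $T=n/k$. The lemma then provides rebalanced weights $\widetilde{\F}$ satisfying
\[
C^{1}(\widetilde{\F},y)\leq 1\quad\text{and}\quad
C^{0}(\widetilde{\F},x)\leq T\,\Exp_{i\in[n]}\!\bigl[c^{0}(e_{i},x)\,c^{1}(e_{i})\bigr]=\frac{n}{k}\,\Exp_{i\in[n]}\!\bigl[C^{0}(\G_{i},x)\,C^{1}(\G_{i})\bigr],
\]
which are exactly the bounds claimed.

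The main subtlety I expect is that Lemma~\ref{lem:specExtended} is stated for ordinary edges, whereas our stage $\F$ consists of super edges. However the rebalancing used in that proof simply multiplies each weight $w_{e}^{b}$ by a scalar $\lambda_{e}$; applying this inside a super edge (i.e. scaling every weight of the learning graph $\G_{i}$ encoded by $e_{i}$ by $\lambda_{e_{i}}$) scales $c^{0}(e_{i},x)$ by $\lambda_{e_{i}}$ and $c^{1}(e_{i})$ by $1/\lambda_{e_{i}}$, so by Lemma~\ref{expansion} the contribution of each super edge to the overall complexities behaves exactly as for an ordinary edge. Thus the speciality argument goes through verbatim on super edges, and no further bookkeeping is required.
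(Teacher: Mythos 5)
Your construction is exactly the paper's: a new root attached to each $\G_i$, weights in component $i$ rescaled by $\lambda_i=C^1(\G_i)/k$ (which is precisely what Lemma~\ref{lem:specExtended}'s proof produces with $n_{\mathrm{used}}=k$), and unit flow split uniformly over $k$ positive components; the paper merely carries out the two-line complexity computation inline instead of citing the speciality lemma. The only cosmetic liberty is calling the components ``super edges'': Definition~\ref{def:super} requires a unique flow sink, which a general $\G_i$ need not have, but since these components are terminal nothing is lost and the formulas of Lemma~\ref{expansion} you rely on hold for the glued graph anyway.
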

\begin{proof}
We define the new learning graph $\G$ by considering a new root $\emptyset$ that we link to the roots of each $\G{}_i$. In particular, each $\G{}_{i}$ lies in a different connected component. 
For $n=3$, the graph is displayed below:\\ %[-2ex]
\centerline{\begin{tikzpicture}
\node (A) at (0,0){$\emptyset$};
\node (B) at (-2,-1){$\G{}_{1}$};
\node (C) at (0,-1){$\G{}_{2}$};
\node (D) at (2,-1){$\G{}_{3}$};
\draw[->,>=latex] (A) -- (B);
\draw[->,>=latex] (A) -- (C);
\draw[->,>=latex] (A) -- (D);
\end{tikzpicture}}
%The rest of the proof consist in proving that $\G{}$ has learning graph complexity $C=O\left(\sqrt{\frac{\sum_{i}(C^{(i)})^{2}}{k}}\right)$. 
Then, we rescale the original weights of edges in each component $\G{}_{i}$ 
 by  $\lambda_i=C^1(\G_i)/k$.
%The main learning graph $\GOR$ is decomposed into a basic search problem where each check is performed by a sub-learning graph.%Let $\G{}_{i}$ be  learning graphs  for functions $f_i$  with complexities $C_{i}(x)$ on input $x$, and 
%Construct the learning graph $\G$ as follows.
%{\color{red}
%Consider a new root $\emptyset$ that we link to the roots of each $\G{}_i$. In particular, each $\G{}_{i}$ leads to different connected components.
%\begin{center}
%\begin{tikzpicture}
%\node (A) at (0,0){$\emptyset$};
%\node (B) at (-2,-2){$\G{}_{1}$};
%\node (C) at (0,-2){$\G{}_{2}$};
%\node (D) at (2,-2){$\G{}_{3}$};
%\draw[->,>=latex] (A) -- (B);
%\draw[->,>=latex] (A) -- (C);
%\draw[->,>=latex] (A) -- (D);
%\end{tikzpicture}
%\end{center}
%%The rest of the proof consist in proving that $\G{}$ has learning graph complexity $C=O\left(\sqrt{\frac{\sum_{i}(C^{(i)})^{2}}{k}}\right)$. 
% Then, we rescale the original weights of edges in each component $\G{}_{i}$ 
% by a factor of $\lambda_i=C^1(\G_i, f_i)/k$.}

The complexity $C^0(\G, x)$ for a negative instance $x$ is
$$	C^0(\G,x)= \sum_{i=1}^{n}  \lambda_i C^0(\G_i,x) = {\frac{n}{k}}\times\Exp_i  \left(C^0(\G_i,x) C^1(\G_i)\right).$$

Consider now a positive instance $y$.
Then $y$ is also a positive instance for at least $k$ functions $f_i$.
Without loss of generality assume further that these $k$ functions are $f_1,f_2,\ldots,f_k$.
We define the flow of $\G$ (for $y$) as a flow uniformly directed from $\emptyset$ to $\G{}_{i}$ for $i=1,2,\ldots,k$. In each component $\G{}_{i}$, the flow is then routed as in $\G{}_{i}$.
Therefore we have
$$	C^1(\G, y)=\sum_{i=1}^{k} \frac{1}{k^2}\times \frac{C^1(\G_i,y)}{\lambda_i } \leq 1.$$
Finally, observe that by construction the flow is directed to sinks having $1$-certificates, thus
$\GOR$ indeed computes $f=\bigvee_{i\in[n]} f_{i}$.
\end{proof}

% PPPPPPPPPPPPPPPPPPPPPPPPPP %
\subsection{Learning graph for Johnson walks}
We build a framework close to the one of quantum walk based algorithms from~\cite{mss07,mnrs11} but for extended learning graphs.
To avoid confusion we encode into a partial assignment the corresponding assigned location, that is,  $z_{S}=\{(i,z_i):i\in S\}$.

Fix some parameters $r\leq k\leq n$.
% consider Boolean functions $f_{A}$ over $\ZZ$.
We would like to define a learning graph $\GJohnson$ for $f=\bigvee_{A} f_{A}$, 
where $A$ ranges over $k$-subsets of $[n]$ and $f_A$ are Boolean functions over $\ZZ$, but differently than in Lemma~\ref{groverExtended}.
%For every subset $A'$ of $[n]$,  let $I(A')\subseteq [N]$ be some index set depending on $A$ only.
For this, we are going to use a learning graph for $f_A$ when the input has been already partially loaded, that is, loaded on $I(A)$ for some subset $I(A)\subseteq [N]$ depending on $A$ only.
Namely, we assume we are given, for every partial assignment $\lambda$,
%each input $z$, 
a learning graph $\G_{A,\lambda}$ defined over inputs $\ZZ_\lambda=\{z\in\ZZ : z(I(A))=\lambda\}$ for $f_A$ restricted to $\ZZ_\lambda$. 

Then, instead of the learning graph of Lemma~\ref{groverExtended},
our learning graph $\GJohnson$ factorizes the load of input $z$ over $I(A)$ for $|A|=k$ and then uses $\G_{A,z_{I(A)}}$. This approach is more efficient
when, for every positive instance $y$, there is a $1$-certificate $I(T_y)$  for some $r$-subset $T_y$, and $A\mapsto I(A)$ is monotone.
This is indeed the analogue of a walk on the Johnson Graph.
%there exists an $r$-subset $T_x\subseteq [n]$ such that 
%$f_A(z)=1$ when $T_z\subseteq A$ and $z_A=y_A$.

We will represent the resulting learning graph $\GJohnson$ graphically using $r+1$ arrows: 
one for the first load of $(k-r)$ elements,  and $r$ smaller ones for each of the last $r$ loads of a single element. For example, when $r=2$ we draw:\\
\centerline
{\begin{tikzpicture}
\node (A) at (0,0){$\emptyset$}; %{};
\node (B) at (3,0){};
\node (C) at (3.5,0){};
\node (D) at (4.5,0){\;$\G_{A,x_{I(A)}}$};
\draw[-,>=latex] (A) -- (D) node[above,midway,sloped] (1){$A$};
\draw[->,>=latex] (A) -- (B);
\draw[->,>=latex] (B) -- (C);
\draw[->,>=latex] (C) -- (D);
\end{tikzpicture}}

%We analyze the complexity of the learning graph for Johnson walk $\GJohnson$ introduced in~\S~\ref{subsec:graphical-notations-LG}.
In the following, $\Learn_{S}$ denotes any super edge loading the elements of $S$,
such as $\Loadall$ or $\Loadsparse$ that we have defined in Lemmas~\ref{lem:complexityLearnAll} and~\ref{lem:complexityLearn}.
\begin{theorem}\label{johnsonExtended}
For every subset $S\subseteq[N]$, let $\Learn_{S}$ be any super edge loading $S$
with $c^1(\Learn_{S})\leq 1$.
Let $r\leq k\leq n$ and let $f=\bigvee_{A} f_{A}$, 
where $A$ ranges over $k$-subsets of $[n]$ and $f_A$ are Boolean functions over $\ZZ$. 

Let $I$ be a monotone mapping from subsets of $[n]$ to subsets of $[N]$
with the property that,  for every $y\in f^{-1}(1)$, 
there is an $r$-subset $T_y\subseteq [n]$ whose image  $I(T_y)$ is a $1$-certificate for $y$. % when $A\supseteq T_y$.

Let $\mathrm{\bf{S}},\mathrm{\bf{U}}>0$ be  such that  every $x\in f^{-1}(0)$ satisfies
\begin{eqnarray}
	 \label{j2}
\Exp_{A' \subset [n] \,:\,  |A'| = k-r}  \left(C^0(\Learn_{I(A')},z)\right) &\leq& \mathrm{\bf{S}}^2 \,;\\
	 \label{j3}
	\Exp_{A'\subset A''\subseteq [n] \,:\,  |A'| =|A''|-1= i}  \left(C^0(\Learn_{I(A'')\backslash I(A')},z) \right) &\leq&\mathrm{\bf{U}}^2,  \quad\text{for  $k-r\leq i <k$} \,.
\end{eqnarray}

Let $\G_{A,\lambda}$ be learning graphs for functions $f_A$ on $\ZZ$ 
 restricted to inputs $\ZZ_\lambda=\{z\in\ZZ : z(I(A))=\lambda\}$, for all $k$-subsets $A$ of $[n]$ and all possible assignments $\lambda$ over $I(A)$.
Let finally $\mathrm{\bf{C}}>0$ be such that every $x\in f^{-1}(0)$ satisfies
\begin{eqnarray}
		\Exp_{A\subseteq[n]\,:\,|A|=k} \left(C^0(\G_{A,x_{I(A)}},x)C^1(\G_{A,x_{I(A)}}, f)\right) &\leq &\mathrm{\bf{C}}^2.	\label{j4}
\end{eqnarray}

Then there is a learning graph $\GJohnson$ for $f$ such that for every  $z\in\ZZ$
$$
\begin{cases}
\displaystyle C^0(\GJohnson, z)= O\left(\mathrm{\bf{S}}^2+\left(\frac{n}{k}\right)^{r}\left(k\times \mathrm{\bf{U}}^2+ \mathrm{\bf{C}}^2\right)\right)&
\text{when $f(z)=0$}, \medskip\\
C^1(\GJohnson, z) = 1&
\text{when $f(z)=1$}.
\end{cases}
$$
%where $A$ is taken over $k$-subsets of $[n]$. % and $T_x=\emptyset$ for negative instances $x$.
\end{theorem}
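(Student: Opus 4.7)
I layer $\GJohnson$ by the size $|A|$: at level $i\in\{k-r,\dots,k\}$ I place one vertex labelled $I(A')$ for each $A'\subseteq[n]$ of size~$i$. The edges are of three kinds: (a) a \emph{setup} super edge $\Learn_{I(A')}$ from the root $\emptyset$ to each level-$(k-r)$ vertex; (b) for $j=1,\dots,r$ and each pair $A'\subset A''$ with $|A'|=k-r+j-1$, a \emph{walk} super edge $\Learn_{I(A'')\setminus I(A')}$ from $I(A')$ to $I(A'')$; (c) at each level-$k$ vertex $I(A)$, a \emph{check} super edge formed by a fresh copy of $\G_{A,x_{I(A)}}$.

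\textbf{Flow.} For a positive instance $y$, fix an $r$-subset $T_y$ whose image $I(T_y)$ is a $1$-certificate. Split the unit flow uniformly among the $\binom{n-r}{k-r}$ level-$(k-r)$ vertices $I(A'_0)$ with $A'_0\cap T_y=\emptyset$, then walk towards $I(A'_0\cup T_y)$ by adding one element of $T_y$ at a time, splitting uniformly among the remaining $r-j+1$ choices at step~$j$. At $I(A)$ with $A=A'_0\cup T_y$, route the flow through $\G_{A,y_{I(A)}}$: by monotonicity of $I$, $I(T_y)\subseteq I(A)$ is a $1$-certificate, so every sink of $\G_{A,y_{I(A)}}$ reached by this flow contains one. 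A direct calculation shows that within each walk stage~$j$ the flow is uniform on used transitions, with value $\frac{(j-1)!(r-j)!}{r!\binom{n-r}{k-r}}$, a prerequisite for applying Lemma~\ref{lem:specExtended}.

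\textbf{Complexity.} I cut $\GJohnson$ into its $r+2$ disjoint stages (setup, $r$ walk stages, check) and apply Lemma~\ref{lem:specExtended} to each, using Lemma~\ref{expansion} to express everything in terms of super-edge complexities. For each stage the speciality is just the ratio $n_{\mathrm{total}}/n_{\mathrm{used}}$; using standard binomial identities, this evaluates to $\binom{n}{k-r}/\binom{n-r}{k-r}=O(1)$ for the setup stage (when $k\leq n/2$), to $\binom{n}{k}/\binom{n-r}{k-r}=\binom{n}{r}/\binom{k}{r}=O((n/k)^r)$ for the check stage, and to $O\bigl(n(n/k)^{j-1}/(r\binom{r-1}{j-1})\bigr)$ for walk stage~$j$. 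Combining with the hypotheses~(\ref{j2}),~(\ref{j3}),~(\ref{j4}) and the fact $c^1(\Learn_S)\leq 1$, I obtain contributions $O(\mathrm{\bf{S}}^2)$, $O((n/k)^r\mathrm{\bf{C}}^2)$, and, once summed over $j$ with a geometric-series argument dominated by its last term, $O(k(n/k)^r\mathrm{\bf{U}}^2)$ from the walk stages. Each stage's positive complexity is at most $1$, and a final uniform rescaling over the $r+2$ stages (along the lines of Lemma~\ref{groverExtended}) brings the global $C^1$ to $1$ at the cost of a constant factor absorbed in the big-$O$ of $C^0$. Summing then gives the claimed bound.

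\textbf{Main obstacle.} The delicate step is the walk-stage arithmetic: one has to express $\binom{n}{k-r+j-1}/\binom{n-r}{k-r}$ as a product of ratios, bound it by $O((n/k)^{j-1})$ in the regime $k\leq n/2$, and then verify that $\sum_{j=1}^{r}(n/k)^{j-1}/\binom{r-1}{j-1}$ is dominated by its last term, so that the prefactor of $\mathrm{\bf{U}}^2$ becomes exactly $k(n/k)^r$ rather than something weaker. A minor subtlety is that the check-stage super edges have variable edge-complexity, but this is handled transparently because Lemma~\ref{lem:specExtended} is formulated for super edges via $c^0$ and $c^1$, so no additional argument beyond Lemma~\ref{expansion} is required.
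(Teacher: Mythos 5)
Your proposal is correct and follows essentially the same route as the paper: the same layered Johnson-walk construction with setup, $r$ walk, and check stages, the same per-stage application of Lemma~\ref{lem:specExtended} (the paper performs the check-stage rescaling by hand, as in Lemma~\ref{groverExtended}, rather than citing the lemma, but the computation is identical), and the same speciality estimates. The only real deviation is that you symmetrize the walk flow over all orderings of $T_y$, whereas the paper routes along one fixed ordering $T_y=\{j_1,\dots,j_r\}$ with a single used outgoing edge per vertex; this only shrinks the speciality of walk stage $j$ by the harmless factor $r\binom{r-1}{j-1}$ and leaves the final bound unchanged.
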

%We can also obtain Lemma~\ref{johnsonExtended-C1not1}, where $C^1$ is not bounded by $1$.
\begin{proof}
\textbf{Construction.}
 We define $\GJohnson$ by emulating a walk on the Johnson graph $J(n,k)$ for searching  a $k$-subset $A$ having an $r$-subset $T_y$ such that
 $I(T_y)$ is a $1$-certificate for $y$.
 In that case, by monotonicity of $I$, the set $I(A)$ will be also a $1$-certificate for $y$.
  %=\{j_1,\ldots,j_r\}$ be a $1$-certificate of $y$.

Our learning graph $\GJohnson$ is composed of $(r+2)$ stages (that is, layers whose total incoming flow is $1$), that we call Stage~$\ell$, for $\ell=0,1,\ldots,r+1$.
%First we need to define an intermediate set of indices $I(B)$ for $(k-r)$-subsets $B$ of $[n]$ as 
%$I(B)=\bigcap_{A} I(A)$, where the intersection is taken over $k$-subsets $A$ of $[n]$ containing $B$.
%By hypothesis we know that $|I(B)|\leq \mathrm{\bf{S}}$ for all $(k-r)$-subsets $B$,
%and that $|I(A)\setminus I(B)|\leq \mathrm{\bf{U}}$ for all $k$-subsets $A$ containing $B$.
%
%All stages but the last one are non-adaptive.
%In that stages, all edges have constant weight. Scaling the weights of edges make positive and negative complexities equal to the global complexity of the stage.
%
An example of such a learning graph for $n=4$, $k=3$ and $r=1$ is represented below:\\
\centerline{%\textbf{Adapter cette figure pour $r=2$}
					\begin{tikzpicture}[scale=0.9]\scriptsize
					\node[draw,ellipse] (A) at (0,-2.7) {$\emptyset$};
%					\node[draw,circle,minimum height=0.8cm] (L) at (-2.5,-2) {$\{1\}$};
%					\node[draw,circle,minimum height=0.8cm] (M) at (-1.5,-2) {$\{1\}$};
%					\node[draw,circle,minimum height=0.8cm] (N) at (-0.5,-2) {$\{1\}$};
%					\node[draw,circle,minimum height=0.8cm] (O) at (0.5,-2) {$\{2\}$};
%					\node[draw,circle,minimum height=0.8cm] (P) at (1.5,-2) {$\{2\}$};
%					\node[draw,circle,minimum height=0.8cm] (Q) at (2.5,-2) {$\{3\}$};
%					
%					\draw[->,>=latex] (A) -- (L);
%					\draw[->,>=latex] (A) -- (M);
%					\draw[->,>=latex] (A) -- (N);
%					\draw[->,>=latex] (A) -- (O);
%					\draw[->,>=latex] (A) -- (P);
%					\draw[->,>=latex] (A) -- (Q);
%					
					\node[draw,ellipse] (B) at (-3.5,-4) {$\{1,2\}$};
					\node[draw,ellipse] (C) at (-2.1,-4) {$\{1,3\}$};
					\node[draw,ellipse] (D) at (-0.7,-4) {$\{1,4\}$};
					\node[draw,ellipse] (E) at (0.7,-4) {$\{2,3\}$};
					\node[draw,ellipse] (F) at (2.1,-4) {$\{2,4\}$};
					\node[draw,ellipse] (G) at (3.5,-4) {$\{3,4\}$};
%					
%					\draw[->,>=latex] (L) -- (B);
%					\draw[->,>=latex] (M) -- (C);
%					\draw[->,>=latex] (N) -- (D);
%					
%					\draw[->,>=latex] (O) -- (E);
%					\draw[->,>=latex] (P) -- (F);
%					\draw[->,>=latex] (Q) -- (G);
					\draw[->,>=latex] (A) -- (B);
					\draw[->,>=latex] (A) -- (C);
					\draw[->,>=latex] (A) -- (D);
					\draw[->,>=latex] (A) -- (E);
					\draw[->,>=latex] (A) -- (F);
					\draw[->,>=latex] (A) -- (G);					
					\node[draw,ellipse] (H) at (-3,-5.5) {$\{1,2,3\}$};
					\node[draw,ellipse] (I) at (-1,-5.5) {$\{1,2,4\}$};
					\node[draw,ellipse] (J) at (1,-5.5) {$\{1,3,4\}$};
					\node[draw,ellipse] (K) at (3,-5.5) {$\{2,3,4\}$};
					\draw[->,>=latex] (B) -- (H);
					\draw[->,>=latex] (C) -- (H);
					\draw[->,>=latex] (E) -- (H);
					\draw[->,>=latex] (B) -- (I);
					\draw[->,>=latex] (D) -- (I);
					\draw[->,>=latex] (F) -- (I);
					\draw[->,>=latex] (C) -- (J);
					\draw[->,>=latex] (D) -- (J);
					\draw[->,>=latex] (G) -- (J);
					\draw[->,>=latex] (E) -- (K);
					\draw[->,>=latex] (F) -- (K);
					\draw[->,>=latex] (G) -- (K);
					\end{tikzpicture}}
%					\end{center}

Stage $0$ of $\GJohnson$ consists in $n \choose (k-r)$ disjoint paths, all of same weights, leading to vertices labelled by some $(k-r)$-subset $A'$ and loading $I(A')$. They can be implemented by the super edges $\Learn_{I(A')}$.
For positive instances $y$, the flow goes from $\emptyset$ to subsets $I(A')$
such that $I(A')\cap T_y=\emptyset$.

For $\ell=1,\ldots,r$, Stage~$\ell$ consists in $(n-(k-r)-\ell+1)$ outgoing edges to each node labeled by a $(k-r+\ell-1)$-subset $A'$.
Those edges are labelled by $(A',j)$ where $j\not\in A'$ and load $I(A'\cup\{j\})\setminus I(A')$. 
%Their lengths are at most $\mathrm{\bf{U}}$ by Hypothesis~\ref{j3}.
They can be implemented by the super edges $\Learn_{I(A'\cup\{j\})\setminus I(A')}$. 
For positive instances $y$, for each vertex $A'$ getting some positive flow, the flow goes out only to the edge $(A',j_\ell)$, with the convention $T_y=\{j_1,\ldots,j_r\}$.

The final Stage~$(r+1)$ consists in plugging in nodes $A$
the corresponding learning graph $\G_{A,x_{I(A)}}$, for each $k$-subset $A$. 
%This is the only adaptive part of our global learning graph $\GJohnson$. 
We take a similar approach than in the construction of $\GOR$ above.
The weights of the edges in each component  $\G_{A,z_{I(A)}}$ are rescaled by a factor
$\lambda_A=C^1(\G_{A,x_{I(A)}})/{{n-r \choose k-r}}$. 
For a positive instance $y$, %Then
the flow is directed uniformly to each $\G_{A,y_{I(A)}}$ such that $T(y)\subseteq A$, and then according to $\G_{A,y_{I(A)}}$.

Observe that by construction, on positive inputs the flow reaches only $1$-certificates of $f$.
Therefore $\GJohnson$ indeed computes $f$.
\smallskip

\textbf{Analysis.}
Remind that the positive edge-complexity of our  super edge $\Learn$
%for loading any set $S$ 
is at most $1$. % $c^1(\Learn_{S})\leq 1$.

At Stage $0$, the $n \choose (k-r)$ disjoint paths are all of same weights.
The flow satisfies the hypotheses of Lemma~\ref{lem:specExtended} with a speciality of $O(1)$.
Therefore, using inequality~\eqref{j2}, the complexity of this stage is $O(\mathrm{\bf{S}}^2)$ when $f(x)=0$, and at most $1$ otherwise.

For $\ell=1,\ldots,r$, at Stage~$\ell$ consists of $(n-(k-r)-\ell+1)$ outgoing edges to each node labeled by a $(k-r+\ell-1)$-subset.
Take a positive instance $y$. %, and  set $T_x=\{j_1,\ldots,j_r\}$.
Recall that, for each vertex $A'$ getting some positive flow, the flow goes out only to the edge $(A',j_\ell)$.
By induction on $\ell$, the incoming flow is uniform when positive. Therefore, the flow on each edge with positive flow is also uniform, and the speciality of the stage is $O((\frac{n}{k})^\ell\cdot {k})$.
Hence, by Lemma~\ref{lem:specExtended} and using
 inequality~\ref{j3}, the cost of each such stage is $O((\frac{n}{k})^\ell \cdot k \cdot \mathrm{\bf{U}}^2)$.
The dominating term is thus $O\left((\frac{n}{k})^r\cdot k\cdot \mathrm{\bf{U}}^2 \right)$.

The analysis of the final stage (Stage~$(r+1)$) is similar to the proof of Lemma~\ref{groverExtended}. For a negative instance $x$, the complexity of this stage is:
\begin{align*}
	\sum_{A} \lambda_A{C^0(\G_{A,x_{I(A)}}, x) }
	&= {\frac{{n\choose k}}{{n-r \choose k-r}}} \Exp_{A} \left({C^0(\G_{A,x_{I(A)}}, x) C^1(\G_{A,x_{I(A)}})}\right)
	\\
	&= O\left( \left(\frac{n}{k}\right)^{r}  \times \Exp_{A} \left({C^0(\G_{A,x_{I(A)}}, x) C^1(\G_{A,x_{I(A)}})}\right)   \right).
\end{align*}
Similarly, when $f(y)=1$, we get a complexity at most $1$.
\end{proof}

%Taking $\Learn = \Loadsparse$ (see Lemma~\ref{lem:complexityLearn}) in Lemma~\ref{johnsonExtended}, we obtain Corollary~\ref{johnsonExtendedLearn}.

% SSSSSSSSSSSSSSSSSSSSSSSSSSSSSSSSSSSSSSSSSSSSSSSSSSSSSSSSSSSSSSSSSSSSSSSSSSSSSSSSSSSSSSSSSSSSSSSSSSSSSSSSSSSSSSSSSSSSSSSSSSSSSSSSSSSSSSSSSSSS %
\section{Application to Triangle Finding}\label{sec:triangle}
 % SSSSSSSSSSSSSSSSSSSSSSSSSSSSSSSSSSSSSSSSSSSSSSSSSSSSSSSSSSSSSSSSSSSSSSSSSSSSSSSSSSSSSSSSSSSSSSSSSSSSSSSSSSSSSSSSSSSSSSSSSSSSSSSSSSSSSSSSSSSS %

\subsection{An adaptive Learning graph for dense case}
%
%We provide a new proof for the recent result of Le Gall~\cite{gal14} and, along the way, we improve slightly the result by getting rid of the logarithmic factors. We show the following result.
%\begin{theorem}
%There exists a quantum algorithm that, given as input the oracle of an unweighted graph $G$ on $n$ vertices, outputs a triangle of $G$ with probability at least $2/3$ if a triangle exists, and uses $O(n^{5/4})$ queries to the oracle.
%\end{theorem}
%To prove this, we find an extended learning graph with complexity $O(n^{5/4})$ (see Theorem~\ref{thm:dense} below) and then apply Lemma~\ref{lem:ELG-querycomplexity}.

%\subsubsection{Le Gall's strategy}
We start by reviewing the main ideas of Le Gall's algorithm in order to find a triangle in an input graph $G$ with $n$ vertices. 
More precisely, we decompose the problem into similar subproblems, and we build up our adaptive learning graph on top of it.
Doing so, we get rid of most of the technical difficulties that arise in the resolution of the underlying problems using quantum walk based algorithms.

Let $V$ be the vertex set of $G$.
For a vertex $u$, let $N_{u}$ be the neighborhood of $u$, and for two vertices $u,v$, let $N_{u,v}=N_u\cap N_v$.
Figure~\ref{LeGall} illustrates the following strategy for finding a potential triangle in some given graph $G$.

First, fix an $x$-subset $X$ of vertices. Then, either $G$ has a triangle with one vertex in $X$ or each (potential) triangle vertex is outside $X$.
The first case is quite easy to deal with, so we ignore it for now and 
we only focus on the second case. Thus there is no need to query any possible edge
between two vertices $u,v$ connected to the same vertex in $X$. Indeed, if such an edge exists, the first case will detect a triangle.
Therefore one only needs to look for a triangle edge in
% potential edges in the set
%Indeed, otherwise there would be a triangle between $u,v$ and $X$. Therefore define 
$\Delta(X)=\{(u,v)\in V^2 : N_{u,v}\cap X=\emptyset\}$.
\begin{figure}[h]
\centerline{\includegraphics[height=3cm]{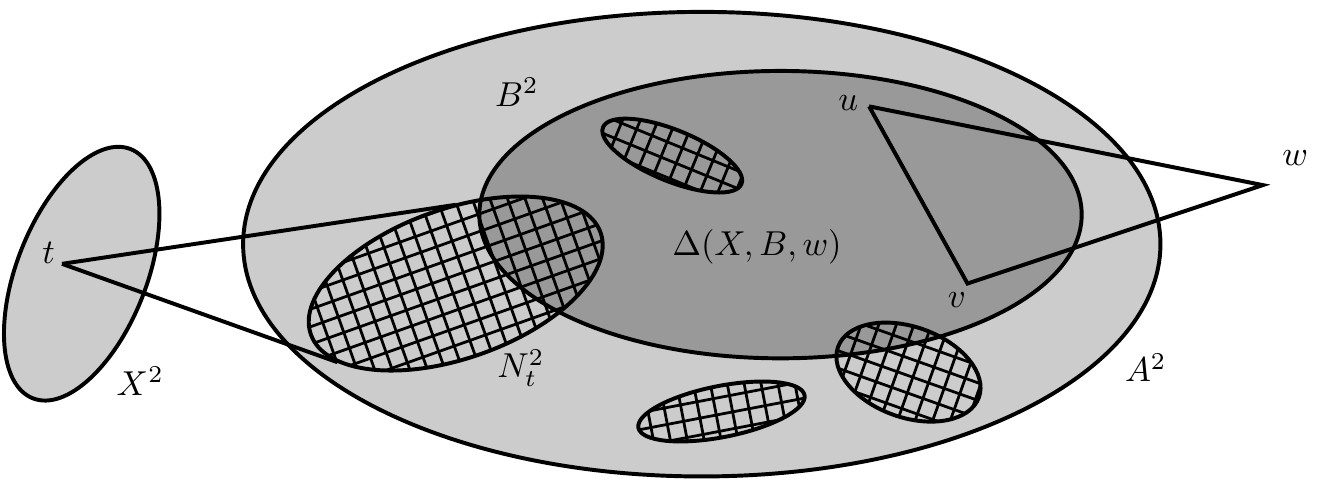}} %{FIGURES/Compact/Delta-XBw-compact-colors-label.pdf}} %{FIGURES/DeltaXBw-2-colored.png}}%{"Learning Graph - 1"}}
\caption{\label{LeGall}Sets involved in Le Gall's algorithm.}
\end{figure}

Second, search for an $a$-subset $A$ with two triangle vertices in it.
For this, construct the set $\Delta(X,A)=A^2\cap \Delta(X)$ of potential triangle edges in $A^2$.
The set $\Delta(X,A)$ can be easily set once all edges between $X$ and $A$ are known.

Third, in order to decide if $\Delta(X,A)$ has a triangle edge, search for a vertex $w$ making a triangle with an edge of $\Delta(X,A)$.
%If $w\in X$ we already know that we can reject. 

Otherwise, search for a $b$-subset $B$ of $A$ such that $w$ makes a triangle with two vertices of $B$.
For this last step, we construct the set $\Delta(X,B,w)=(N_{w})^2\cap \Delta(X,B)$ of pairs of vertices connected to $w$.
If any of such pairs is an actual edge, then we have found a triangle. 
%We search for such a pair using a Grover search.

We will use learning graphs of type $\GOR$ for the first step, for finding an appropriate vertex $w$, and for deciding weither $\Delta(X,B,w)$ has an edge;
and learning graphs of type $\GJohnson$ for finding subsets $A$ and $B$.

More formally now, let $\Triangle$ be the Boolean function such that $\Triangle(G)=1$ iff graph input $G$ has a triangle.
We do the following decomposition.
%\textbf{Decomposition of the function $f$: a sufficient condition for negative instances.}
First, observe that $\Triangle=\bigvee_{X\,:\, |X| = x} ( h_{X} \vee f_{X})$
with
%\begin{itemize}
	%\item 
	$h_{X}(G) = 1$ (resp. $f_{X}(G) = 1$) iff $G$ has a triangle with a vertex in $X$ (resp. with no vertex in $X$).
	%\item  $f_{X}(G) = 1$ iff $G$ has a triangle with no vertex in $X$.
%\end{itemize}
Then, we pursue the decomposition for $f_X(G)$ as $f_X(G)=\bigvee_{A\,:\,|A| = a} f_{X,A}(G)$ and 
$f_{X,A} (G)=\bigvee_{w \in V} f_{X,A,w}(G)$, for $A \subseteq V$ and $ w \in V$, where
\begin{itemize}
	\item $f_{X,A}(G) = 1$ iff $G$ has a triangle between two vertices in $A\setminus X$ and a third one outside $X$;
	\item $f_{X,A,w}(G) = 1$ iff $w\not\in X$ and $G$ has a triangle between $w$ and two vertices in $A\setminus X$.
\end{itemize}
Last, we can write $f_{X,A,w} (G)=\bigvee_{B\subset A,\; |B|=b} f_{X,B,w}(G)$. 

%$f_X(G)=\bigvee_{A \subseteq V, |A| = a} f_{X,A}(G)$ and 
%$f_{X,A} (G)=\bigvee_{w \in V} f_{X,A,w}(G)$.

%\subsubsection{A simpler algorithm using adaptative learning graphs}

With our notations introduced in Section~\ref{subsec:decomp-lemmas}, 
our adaptative learning graph $\G$ for Triangle Finding
%$\G$ consists in a $6$-stage learning graph composed by a Grover search on all $X$ followed by two branches. One for searching a vertex of triangle in $X$ and the other one for searching a triangle with all its vertices outside $X$.
%With our those notations, 
%$\G$ 
can be represented as in Figure~\ref{LeGallLG}. % where we use $\Load=\Loadall$.
%The resulting learning is therefore only adaptive.
\begin{figure}[h!]
\centerline{
\begin{tikzpicture}
\node (A) at (0,0.5) {$\emptyset$};
\node (B) at (2,0.5) {};
\draw[->,>=latex] (A) -- (B) node[above,midway,sloped] (1){$X$};
\node (C) at (3,1) {};
\node (D) at (3,0) {};
\draw[->,>=latex] (B) -- (C);
\draw[->,>=latex] (B) -- (D);
\node (E) at (4,0) {};
\node (F) at (5,0) {};
\draw[->,>=latex] (D) -- (E) node[above,midway,sloped] (2){$t$};
\draw[->,>=latex] (E) -- (F) node[above,midway,sloped] (3){$uv$};
\node (G) at (5,1){};
\node (H) at (5.5,1){};
\node (I) at (6,1){};
\draw[-,>=latex] (C) -- (I) node[above,midway,sloped] (4){\textit{$A$}};
\draw[->,>=latex] (C) -- (G);
\draw[->,>=latex] (G) -- (H);
\draw[->,>=latex] (H) -- (I);
\node (J) at (7,1) {};
\draw[->,>=latex] (I) -- (J) node[above,midway,sloped] (1){$w$};
\node (K) at (9,1){};
\node (L) at (9.5,1){};
\node (M) at (10,1){};
\draw[-,>=latex] (J) -- (M) node[above,midway,sloped] (4){\textit{$B$}};
\draw[->,>=latex] (J) -- (K);
\draw[->,>=latex] (K) -- (L);
\draw[->,>=latex] (L) -- (M);
\node (N) at (12,1) {};
\draw[->,>=latex] (M) -- (N) node[above,midway,sloped] (1){$\Delta(X,B,w)$};
\iffalse
\node[draw,circle,minimum height=0.4cm] (O) at (1,2.1) {$\scriptstyle 1$};
\node[draw,circle,minimum height=0.4cm] (P) at (2.5,2.1) {$\scriptstyle 2$};
\node[draw,circle,minimum height=0.4cm] (R) at (4.5,2.1) {$\scriptstyle 3$};
\node[draw,circle,minimum height=0.4cm] (T) at (6.5,2.1) {$\scriptstyle 4$};
\node[draw,circle,minimum height=0.4cm] (V) at (8.5,2.1) {$\scriptstyle 5$};
\node[draw,circle,minimum height=0.4cm] (X) at (11,2.1) {$\scriptstyle 6$};
\fi
\end{tikzpicture}
}\caption{\label{LeGallLG}Learning graph for Triangle Finding with complexity $O(n^{5/4})$.}
\end{figure}
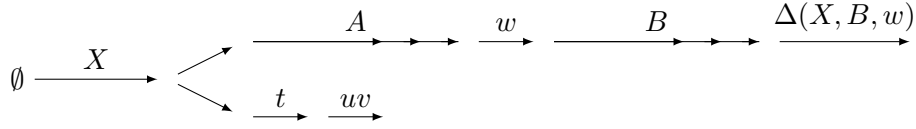

Using adaptive learning graphs instead of the framework of quantum walk based algorithms from~\cite{mnrs11}
simplifies the implementation of the above strategy because
one can consider all the possible subsets $X$ instead of choosing just a random one.
Then one only needs to estimate the average complexity over all possible $X$. 
Such an average analysis was not considered in the framework of~\cite{mnrs11}.
In addition, we do not need to estimate the size of $\Delta(X,A,w)$ at any moment of our algorithm.
As a consequence, our framework greatly simplifies the combinatorial analysis of our algorithm as compared to the one of Le~Gall,
and lets us shave off some logarithmic factors. See Appendix~\ref{app:thm:dense} for the proof of the following theorem.

\begin{theorem}\label{thm:dense}
The adaptive learning graph of Figure~\ref{LeGallLG}
 with $|X|=x$, $|A|=a$,  $|B|=b$, and using $\Load=\Loadall$,
 has complexity
$$
O\left(\sqrt{	xn^2
	+ (ax)^2 
	+\left(\frac{n}{a}\right)^2\left(a\cdot x^2 
		+ {n}\left(b^2
			+\left(\frac{a}{b}\right)^2\left( {b}
			+ \frac{b^2}{x} \right)
		\right)
	\right)}\right).
$$
In particular, taking $a=n^{3/4}$ and $b=x=\sqrt{n}$ leads to $Q(\Triangle)=O(n^{5/4})$.
\end{theorem}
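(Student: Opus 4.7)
The plan is to realise the learning graph of Figure~\ref{LeGallLG} by iterated application of Lemma~\ref{groverExtended} and Theorem~\ref{johnsonExtended} to the decomposition $\Triangle = \bigvee_X (h_X \vee f_X)$, then $f_X = \bigvee_A f_{X,A}$, and so on down to the pairs of $\Delta(X,B,w)$. Each of the six summands inside the square root of the claimed bound will correspond to the negative-complexity contribution of one stage of the graph after the rescalings of Lemma~\ref{lem:specExtended} and of the composition lemmas. The positive complexity of the whole graph will be kept $\le 1$ by design, since each $\Loadall$ super edge satisfies $c^1 \le 1$ by Lemma~\ref{lem:complexityLearnAll} and the flows are normalised to $1$ at every stage.

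First I would handle the outer OR over $x$-subsets $X$ via Lemma~\ref{groverExtended}. Here we are in the favourable regime $k = \binom{n}{x}$: every positive input satisfies $h_X \vee f_X = 1$ for every choice of $X$, so the factor $n/k$ in Lemma~\ref{groverExtended} equals $1$ and only the expectation over uniform $X$ contributes. Inside each $X$-branch the $h_X$ part is an OR (over the third triangle vertex) handled again by Lemma~\ref{groverExtended}, while the $f_X$ part is a Johnson walk of Theorem~\ref{johnsonExtended} with $k = a$ and $r = 2$, taking $T_y$ to be the two triangle vertices outside $X$ and $I(A)$ the relevant edge set between $X$ and $A$. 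The verification of each $A$ is an OR over $w$ followed by a nested instance of Theorem~\ref{johnsonExtended} on $b$-subsets $B \subseteq A$ with $r = 2$, whose final verification is an OR over the pairs of $\Delta(X,B,w)$.

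Since each $\Loadall_S$ contributes $c^0 = |S|^2$ and $c^1 \le 1$, the accounting reduces to identifying what is loaded at each stage. Term-by-term, the six stages of the graph produce the summands $xn^2$, $(ax)^2$, $(n/a)^2\cdot a x^2$, $(n/a)^2\cdot n b^2$, $(n/a)^2\cdot n (a/b)^2\cdot b$, and $(n/a)^2\cdot n (a/b)^2\cdot b^2/x$, where the prefactors $(n/a)^2$, $n$ and $(a/b)^2$ come respectively from the outer Johnson walk over $A$, the OR over $w$, and the inner Johnson walk over $B$. Summing and square-rooting gives the expression in the theorem.

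The main technical obstacle is the combinatorial averaging that justifies the last term $b^2/x$. After conditioning on $X$, $A$, $w$ and $B$, I need a bound $\Exp|\Delta(X,B,w)| = O(b^2/x)$, where $\Delta(X,B,w) = (N_w)^2 \cap \Delta(X,B)$. The key input is that for each candidate pair $\{u,v\}$ the probability over uniform $X \in \binom{V}{x}$ that $N_{u,v}\cap X = \emptyset$ is bounded by $(1 - x/n)^{|N_{u,v}|}$; on the ``surviving'' pairs of interest this is on average $O(1/x)$, and a union bound over $\binom{B}{2}$ yields the required $b^2/x$. The other averages required to instantiate Theorem~\ref{johnsonExtended} (for $\mathbf{S}^2$, $\mathbf{U}^2$ and $\mathbf{C}^2$) are similar hypergeometric estimates, and verifying the hypothesis of Lemma~\ref{lem:specExtended} at each stage (uniform flow on the used transitions) is routine since each Johnson or OR stage is itself defined by a uniformisation. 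Once the six terms are in place, optimising under $b \le a$ and $b \le x$ gives $a = n^{3/4}$, $b = x = \sqrt{n}$, and hence $Q(\Triangle) = O(n^{5/4})$.
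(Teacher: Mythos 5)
Your construction and term-by-term accounting coincide with the paper's own proof in Appendix~\ref{app:thm:dense}: the outer OR over $X$ with $k=\binom{n}{x}$ (so the $n/k$ factor of Lemma~\ref{groverExtended} is $1$ and only the average over $X$ remains), the $h_X$ branch of cost $xn^2$, a Johnson walk over $a$-subsets with $r=2$ and $I(A)=X\times A$ giving $\mathbf{S}\le ax$ and $\mathbf{U}\le x$, the OR over $w$ contributing the factor $n$, the inner Johnson walk over $b$-subsets $B\subseteq A$ giving $\mathbf{S}\le b$ and $\mathbf{U}\le 1$, and the final OR over $\Delta(X,B,w)$. Your six summands are exactly the paper's.

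The one step that does not go through as written is the justification of the last term, $\Exp\left[|\Delta(X,B,w)|\right]=O(b^2/x)$. You bound, for a fixed pair $\{u,v\}$, the probability over $X$ alone that $N_{u,v}\cap X=\emptyset$ by roughly $e^{-|N_{u,v}|x/n}$ and assert that on the surviving pairs this is ``on average $O(1/x)$.'' For a fixed $w$ this is false: whenever $|N_{u,v}|=O(n/x)$ the survival probability is $\Theta(1)$, and one can arrange that $w$ is adjacent to all of $B$ while every $N_{u,v}$ equals $\{w\}$, in which case $|\Delta(X,B,w)|=\Theta(b^2)$ with constant probability over $X$, not $O(b^2/x)$. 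The paper's Lemma~\ref{delta} escapes this by averaging \emph{jointly} over $X$ and $w$: the pair $\{u,v\}$ lands in $\Delta(X,B,w)$ only if additionally $w\in N_{u,v}$, an independent event of probability $t/n$ with $t=|N_{u,v}|$, and the product $\frac{t}{n}\left(1-\frac{t}{n}\right)^{x}=\frac{\alpha}{x}\left(1-\frac{\alpha}{x}\right)^{x}\le\frac{\alpha e^{-\alpha}}{x}\le\frac{1}{x}$ (with $\alpha=tx/n$) is $O(1/x)$ uniformly in $t$. With that correction, summing over the $b^2$ pairs of $B^2$ gives the $b^2/x$ term, and the optimization at $a=n^{3/4}$, $b=x=\sqrt{n}$ proceeds exactly as you describe.
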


%\subsubsection{Complexity analysis}

% SSSSSSSSSSSSSSSSSSSSSSSSSSSSSSSSSSSSSSSSSSSSSSSSSSSSSSSSSSSSSSSSSSSSSSSSSSSSSSSSSSSSSSSSSSSSSSSSSSSSSSSSSSSSSSSSSSSSSSSSSSSSSSSSSSSSSSSSSSSS %
\subsection{Sparse graphs}
In the sparse case we now show to use extended learning graphs in order to get a better complexity than the one of Theorem~\ref{thm:dense}.
Proofs of this section are deferred to Appendix~\ref{app:thm:sparse}.
%\subsection{A better algorithm using extended learning graphs}

First, the same learning graph of Theorem~\ref{thm:dense}
has a much smaller complexity for sparse graphs when $\Loadsparse{}$ is used instead of $\Loadall{}$.
\begin{theorem}\label{thm:sparse}
The learning graph of Figure~\ref{LeGallLG},
using $\Load=\Loadsparse$, has complexity over graphs with $m$ edges 
%and average degree at most $d=2m/n$
$$
{O}\left(\sqrt{	
\left(xm+(ax)^2\cdot\frac{m}{n^2}+\left(\frac{n}{a}\right)^2\left(a\cdot x^2 \cdot\frac{m}{n^2} + {n}\left(b^2\cdot\frac{m}{n^2}+\left(\frac{a}{b}\right)^2\left( {b}+ \frac{b^2}{x} \right) \right) \right)\right)\log n
}\right).
$$
In particular,  taking $a=n^{3/4}$ and $b=x=\sqrt{n}/(m/n^2)^{1/3}$ leads to a complexity of ${O}(n^{11/12}m^{1/6}\sqrt{\log n})$ when $m\geq n^{5/4}$.
\end{theorem}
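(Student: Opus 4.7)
The plan is to reuse the exact learning graph of Figure~\ref{LeGallLG}, keeping the same topology, flow structure and speciality calculations used to prove Theorem~\ref{thm:dense}, but systematically substituting each super edge $\Loadall_S$ by $\Loadsparse_S$. By Lemma~\ref{lem:complexityLearn} the positive edge-complexity remains at most~$1$, so $C^1$ of every stage is unchanged and the only quantity that needs to be reassessed is the negative complexity $C^0$ on an arbitrary input~$z$ with $m$ edges. For each loading that appeared in Theorem~\ref{thm:dense} with cost $|S|^2$, the sparse version gives instead $O(|S|(|z_S|+1)\log n)$, and one only has to replace $|S|^2$ by this new quantity in the appropriate speciality-averaged bounds of Theorem~\ref{johnsonExtended} and of Lemma~\ref{groverExtended}.

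The main technical input is the following uniform sparsity estimate: for every family $\mathcal{A}$ of subsets $S \subseteq [N]$ arising as $I(A')$ or $I(A'')\setminus I(A')$ in the analysis, the expectation $\Exp_{S \in \mathcal{A}}[|z_S|]$ is bounded by $|S|\cdot 2m/n^2$, simply because each of the at most $\binom{n}{2}$ possible edges is charged with density at most $2m/n^2$. Plugged into Theorem~\ref{johnsonExtended}, this turns each bound of the form $\Exp_{A'}[C^0(\Load_{I(A')},z)] \leq \mathrm{\bf S}^2$ into one of order $|S|\cdot(|S|m/n^2+1)\log n$, which for $|S|\geq n^2/m$ reduces to the old value multiplied by $(m/n^2)\log n$. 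Applying this substitution term by term, every monomial of the dense bound of Theorem~\ref{thm:dense} that came from a loading super edge, namely $xn^2$, $(ax)^2$, $ax^2$, and $b^2$, gets multiplied by $(m/n^2)\log n$; the purely "searching" monomials $(a/b)^2 b$ and $(a/b)^2 b^2/x$ (which come from edges in the Johnson stages loading a single element) are untouched. This yields exactly the expression claimed in the statement.

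To conclude, one optimizes over $x,a,b$. Keeping $a=n^{3/4}$ as in the dense case, setting $b=x$ and writing $\alpha=m/n^2$, the five dominant contributions under the square root all collapse, by a direct computation, to $n^{5/2}\alpha^{1/3}=n^{11/6}m^{1/3}$ when $x=\sqrt{n}/\alpha^{1/3}$, giving after the square root the bound $O(n^{11/12}m^{1/6}\sqrt{\log n})$. The restriction $m \geq n^{5/4}$ is exactly what is needed to ensure that the chosen $x = b$ satisfies $|S|\geq n^2/m$ for the loaded sets $S$ of relevant size, so that the $|S|\cdot m/n^2$ term dominates $|S|$ inside $\Loadsparse$ and the lower-order $\log n$ contributions can indeed be absorbed.

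The main obstacle in making this rigorous is bookkeeping: the Hamming-weight average $\Exp[|z_S|]\leq 2|S|m/n^2$ has to be verified uniformly over all stages, including those at which the loaded set $I(A'')\setminus I(A')$ depends on the previously made choices (such as the already-loaded neighborhoods of $X$ or $A$). The argument is that the relevant average is always an unbiased estimator of a fixed edge-density, so the sparsity bound goes through stage by stage. Once this is in place, the rest of the proof is a mechanical re-use of Theorem~\ref{johnsonExtended}, Lemma~\ref{groverExtended} and the speciality Lemma~\ref{lem:specExtended}, following the very same decomposition as in the proof of Theorem~\ref{thm:dense}.
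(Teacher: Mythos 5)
Your overall strategy is the paper's: keep the learning graph of Figure~\ref{LeGallLG}, swap $\Loadall$ for $\Loadsparse$ in the Johnson-walk stages, control the new negative costs via the expected edge-density of the loaded sets (this is exactly Lemmas~\ref{lem:ninter} and~\ref{ninterExpExp}, which give $\Exp_{X,Y}|E(X,Y)|=2xym/n^2$), and then rerun the same optimization with $a=n^{3/4}$, $b=x=\sqrt n/(m/n^2)^{1/3}$. For the terms $(ax)^2$, $ax^2$ and $b^2$ your term-by-term substitution is correct and matches the paper's computation, including the observation that the positive complexities stay at most $1$ and that the single-element loads $(a/b)^2(b+b^2/x)$ are unaffected.

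There is one concrete gap: the first term. You claim the monomial $xn^2$ (the cost of the stage $\HH_X$ computing $h_X$, i.e.\ a triangle with a vertex in $X$) ``came from a loading super edge'' and hence gets multiplied by $(m/n^2)\log n$ to give $xm\log n$. But $\HH_X$ is not a single $\Loadall$ super edge of size $|S|$ with cost $|S|^2$, so the blanket substitution does not apply to it. The paper has to restructure this stage: for each $v\in X$ it loads $\{v\}\times V$ with $\Loadsparse$ (average cost $xdn\log n$ with $d=2m/n$), and then loads $N_v\times N_v$ with $\Loadall$, which contributes an \emph{additional} term $x\,\Exp_v|N_v|^2 = x(d_2)^2$. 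This extra term is absorbed only via the separate estimate $(d_2)^2=\Exp_v|N_v|^2\le n\,\Exp_v|N_v|=dn$, which your argument never produces. Without handling the $N_v\times N_v$ part and bounding $x(d_2)^2$ by $xdn$, the claimed $xm\log n$ bound for this stage is not justified. A second, minor discrepancy: you attribute the restriction $m\ge n^{5/4}$ to the condition $b\,m/n^2\ge 1$; the paper's constraint is instead $b=x=\sqrt n/(m/n^2)^{1/3}\le a=n^{3/4}$. The two happen to give the same threshold here, but you should state the constraint you actually need for the optimization to be admissible.
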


We now end with an even simpler learning graph whose complexity depends on its average of squared degrees.
It is very simple. 
See Figure~\ref{SparseNewDec} for the illustration.
It consists in searching for a triangle vertex $w$.
In order to check if $w$ is such a vertex, we search for a $b$-subset $B$ with an edge connected to $w$.
For this purpose, we first connect $w$ to $B$, and then check if there is an edge in $(N_w\cap B)^2$.

Formally, we do the decomposition
$\Triangle=\bigvee_{w \in V} f_w$,
with $f_w(G) = 1$ iff $w$ is a triangle vertex in $G$. Then, we pursue the decomposition with $f_w(G) = \bigvee_{B \subseteq V \,:\,|B| = b} f_{w,B}(G)$
where $f_{w,B}(G) = 1$ iff $G$ has a triangle formed by $w$ and two vertices of $B$.
Using our notations, the resulting learning graph is represented by the diagram in Figure~\ref{SparseNewLG}.

\begin{figure}[h]
\centerline{\includegraphics[height=3cm]{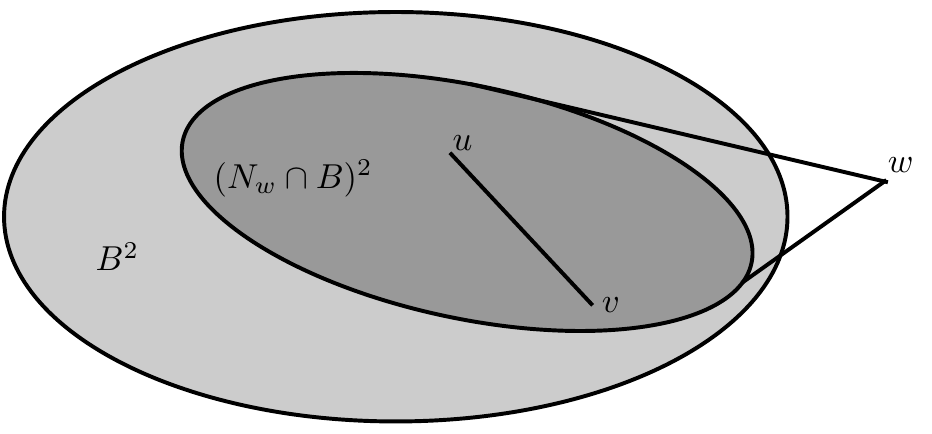}} %{FIGURES/neighbors-w-2.eps}}%{"Learning Graph - 2"}}
\caption{\label{SparseNewDec}Sets involved in the sparse decomposition.}
\end{figure}
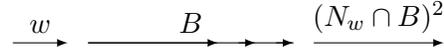
\begin{figure}[h]
\centerline{
\begin{tikzpicture}
\node (J) at (7,1) {};
\draw[->,>=latex] (I) -- (J) node[above,midway,sloped] (1){$w$};
\node (K) at (9,1){};
\node (L) at (9.5,1){};
\node (M) at (10,1){};
\draw[-,>=latex] (J) -- (M) node[above,midway,sloped] (4){\textit{$B$}};
\draw[->,>=latex] (J) -- (K);
\draw[->,>=latex] (K) -- (L);
\draw[->,>=latex] (L) -- (M);
\node (N) at (12,1) {};
\draw[->,>=latex] (M) -- (N) node[above,midway,sloped] (1){$(N_w\cap B)^2$};
\iffalse
\node[draw,circle,minimum height=0.4cm] (T) at (6.5,2.1) {$\scriptstyle 1$};
\node[draw,circle,minimum height=0.4cm] (V) at (8.5,2.1) {$\scriptstyle 2$};
\node[draw,circle,minimum height=0.4cm] (X) at (11,2.1) {$\scriptstyle 3$};
\fi
\end{tikzpicture}
}\caption{\label{SparseNewLG}Learning graph for Triangle Finding with complexity $\widetilde{O}((n^{5/6}m^{1/6}+d_2\sqrt{n})\log n)$.}
\end{figure}
%
%\begin{figure}[h]
%\centerline{\includegraphics[width=7cm]{FIGURES/neighbors-w-2.eps}}
%%\caption{\label{SparseNewLG}}
%\end{figure}

We prove in Appendix~\ref{app:thm:sparse} the following theorem, where $d_2 = \sqrt{\Exp_v \big[\,|N_v|^2\,\big]}$ denotes the variance of the degrees.
\begin{theorem}\label{thm:sparsenew}
Let $b\geq n^2/m$.
The learning graph of Figure~\ref{SparseNewLG}, using 
$\Loadsparse$ for the first stage of $\GJohnson$ and
$\Loadall$ otherwise, 
has complexity over graphs with  $m$ edges 
%and average degree at most $d=2m/n$
$$
{O}\left(\sqrt{	
n\left(b^2 \frac{m}{n^2}\log n
+\frac{n^2}{b^2}\left(
b+\frac{b^2(d_2)^2 }{n^2}
\right)\right)
}\right).
$$
Taking $b=n^{4/3}/(m\log n)^{1/3}$ %(which is $\geq n^2/m$)
leads to a complexity of ${O}(n^{5/6}(m\log n)^{1/6}+d_2\sqrt{n})$.
\end{theorem}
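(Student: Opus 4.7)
The plan is to realise the learning graph of Figure~\ref{SparseNewLG} by two nested applications of our composition lemmas. Outside, we decompose $\Triangle=\bigvee_{w\in V}f_w$ and apply Lemma~\ref{groverExtended} with $k=3$, since every triangle contributes three candidate vertices $w$. Inside each branch I view $f_w=\bigvee_{B\subseteq V,\,|B|=b}f_{w,B}$ as a Johnson walk and invoke Theorem~\ref{johnsonExtended} with $k=b$, $r=2$, and edge-labelling $I(A)=\{w\}\times A$: for an input $G$ with triangle $(w,u,v)$, the routing set is $T_y=\{u,v\}$, so that once $B\supseteq T_y$ the two vertices $u,v$ lie in $N_w\cap B$ and force $f_{w,B}(G)=1$; it then remains for the plugged sub-graph $\G_{w,B}$ to discover the edge $(u,v)\in(N_w\cap B)^2$. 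I realise $\G_{w,B}$ itself by a second application of Lemma~\ref{groverExtended}, this time to the OR over the $\binom{|N_w\cap B|}{2}$ pairs, each branch doing a trivial $\Loadall$ of the single edge being probed; this yields $C^0(\G_{w,B},G)\cdot C^1(\G_{w,B})=O(|N_w\cap B|^2)$.

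Next I compute the three parameters that Theorem~\ref{johnsonExtended} requires, each as a function of the fixed vertex $w$ obtained as an expectation over the random subset involved in the corresponding stage. For $\mathbf{S}^2$, Stage $0$ loads $\{w\}\times A'$ with $|A'|=b-2$ via $\Loadsparse$; by Lemma~\ref{lem:complexityLearn}, its negative cost is $O(b(|N_w\cap A'|+1)\log n)$, and the hypergeometric mean $\Exp_{A'}[|N_w\cap A'|]=O(|N_w|b/n)$ gives $\mathbf{S}^2(w)=O((b^2|N_w|/n+b)\log n)$. For $\mathbf{U}^2$, the intermediate stages each load a single edge $(w,j)$ via $\Loadall$, which costs $O(1)$ by Lemma~\ref{lem:complexityLearnAll}. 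For $\mathbf{C}^2$, the hypergeometric second moment $\Exp_B[|N_w\cap B|^2]=O(|N_w|b/n+|N_w|^2 b^2/n^2)$ propagates through $\G_{w,B}$ to yield $\mathbf{C}^2(w)$ of that same order.

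I then substitute into the bound of Theorem~\ref{johnsonExtended}, namely $C^0(\G_w,G)=O(\mathbf{S}^2(w)+(n/b)^2(b\mathbf{U}^2+\mathbf{C}^2(w)))$ with $C^1(\G_w)=O(1)$, and finally average over $w$ via the outer Lemma~\ref{groverExtended}, which multiplies by $n/3$ and introduces the global moments $\Exp_w[|N_w|]=2m/n$ and $\Exp_w[|N_w|^2]=d_2^2$. The hypothesis $b\geq n^2/m$ is invoked twice: once to absorb $b\log n$ into $b^2(m/n^2)\log n$, and once (together with $d_2\geq 2m/n$ from Cauchy--Schwarz) to absorb the cross-term $(2m/n)(b/n)=2mb/n^2$ into $d_2^2 b^2/n^2$, the required inequality $2m/b\leq d_2^2$ following from $b\geq n^2/(2m)$. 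What remains factors as $n(b^2(m/n^2)\log n+(n^2/b^2)(b+b^2d_2^2/n^2))$, which is exactly the displayed expression. Setting $b=n^{4/3}/(m\log n)^{1/3}$ balances the summands $b^2(m/n)\log n$ and $n^3/b$ at the common value $n^{5/3}(m\log n)^{1/3}$, whose square root gives $n^{5/6}(m\log n)^{1/6}$, while the remaining $nd_2^2$ under the root contributes the additive $d_2\sqrt{n}$.

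The main obstacle is the double averaging for $\mathbf{C}^2$: one first averages over the random $B$ inside the Johnson walk to translate the worst-case $|N_w\cap B|^2$ into a combination of $|N_w|$ and $|N_w|^2$, and then averages over $w$ in the outer Grover stage so that $d_2$ enters naturally; keeping all small and cross-terms under control at the interface between the two averagings is where the density hypothesis $b\geq n^2/m$ earns its keep.
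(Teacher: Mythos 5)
Your proposal is correct and follows essentially the same route as the paper: an outer application of Lemma~\ref{groverExtended} over the vertex $w$, an inner Johnson walk (Theorem~\ref{johnsonExtended} with $k=b$, $r=2$, $I(A)=\{w\}\times A$) using $\Loadsparse$ for the first stage and $\Loadall$ afterwards, and a final OR over $(N_w\cap B)^2$, with the second moment $\Exp_{w,B}[|N_w\cap B|^2]$ controlled exactly as in Lemma~\ref{lem:ninter2} and the hypothesis $b\geq n^2/m$ used to absorb the lower-order terms. The paper merely phrases this more tersely by specializing the Stage~4--6 analysis of Theorem~\ref{thm:sparse} with $X=\emptyset$, $A=V$; your explicit tracking of the hypergeometric moments and of where $b\geq n^2/m$ is needed matches its content.
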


\bibliography{lgtriangle}

% AAAAAAAAAAAAAAAAAAAAAAAAAAAAAAAAAAAAAAAAAAAAAAAAAAAAAAAAAAAAAAAAAAAAAA %
\appendix

% SSSSSSSSSSSSSSSSSSSSSSSSSSSSSSSSSSSSSSSSSSSSSSSSSSSSSSSSSSSSSSSSSSSSSSSSSSSSSSSSSSSSSSSSSSSSSSSSSSSSSSSSSSSSSSSSSSSSSSSSSSSSSSSSSSSSSSSSSSSS %
\section{Learning graph analysis for dense graphs}
\label{app:thm:dense}
% SSSSSSSSSSSSSSSSSSSSSSSSSSSSSSSSSSSSSSSSSSSSSSSSSSSSSSSSSSSSSSSSSSSSSSSSSSSSSSSSSSSSSSSSSSSSSSSSSSSSSSSSSSSSSSSSSSSSSSSSSSSSSSSSSSSSSSSSSSSS %
\begin{proof}[Proof of Theorem~\ref{thm:dense}]
From now on, fix some input graph $G = (V,E)$ (with or without a triangle).
%When $G$ has at least one triangle, we name $\alpha,\beta,\gamma$ the vertices of one of its triangles.
We compute the complexity $C(\G, G)$ of $\G$ on $G$ using Lemmas~\ref{groverExtended} and~\ref{johnsonExtended}.
From the decomposition of $\Triangle$ one can already check that the resulting learning graph computes the function $\Triangle$.

%Let $f$ be the boolean function such that $f(G)=1$ iff $G$ has a triangle.
%From now on, we replace all $O()$ notations by inequalities $\leq$ where constant multiplicative factors are omitted. 
Also all complexities for positive instances will be at most $1$. Therefore, we only compute the complexity of negative instances, and drop multiplicative factors corresponding to the complexity of a learning graph on positive instances. 

We decompose the analysis in stages as in Figure~\ref{LeGallLG2},
and we compute their respective negative complexities on some given graph $G$.
\begin{figure}[h!]
\centerline{
\begin{tikzpicture}
\node (A) at (0,0.5) {$\emptyset$};
\node (B) at (2,0.5) {};
\draw[->,>=latex] (A) -- (B) node[above,midway,sloped] (1){$X$};
\node (C) at (3,1) {};
\node (D) at (3,0) {};
\draw[->,>=latex] (B) -- (C);
\draw[->,>=latex] (B) -- (D);
\node (E) at (4,0) {};
\node (F) at (5,0) {};
\draw[->,>=latex] (D) -- (E) node[above,midway,sloped] (2){$t$};
\draw[->,>=latex] (E) -- (F) node[above,midway,sloped] (3){$uv$};
\node (G) at (5,1){};
\node (H) at (5.5,1){};
\node (I) at (6,1){};
\draw[-,>=latex] (C) -- (I) node[above,midway,sloped] (4){\textit{$A$}};
\draw[->,>=latex] (C) -- (G);
\draw[->,>=latex] (G) -- (H);
\draw[->,>=latex] (H) -- (I);
\node (J) at (7,1) {};
\draw[->,>=latex] (I) -- (J) node[above,midway,sloped] (1){$w$};
\node (K) at (9,1){};
\node (L) at (9.5,1){};
\node (M) at (10,1){};
\draw[-,>=latex] (J) -- (M) node[above,midway,sloped] (4){\textit{$B$}};
\draw[->,>=latex] (J) -- (K);
\draw[->,>=latex] (K) -- (L);
\draw[->,>=latex] (L) -- (M);
\node (N) at (12,1) {};
\draw[->,>=latex] (M) -- (N) node[above,midway,sloped] (1){$\Delta(X,B,w)$};
\iftrue
\node[draw,circle,minimum height=0.4cm] (O) at (1,2.1) {$\scriptstyle 1$};
\node[draw,circle,minimum height=0.4cm] (P) at (2.5,2.1) {$\scriptstyle 2$};
\node[draw,circle,minimum height=0.4cm] (R) at (4.5,2.1) {$\scriptstyle 3$};
\node[draw,circle,minimum height=0.4cm] (T) at (6.5,2.1) {$\scriptstyle 4$};
\node[draw,circle,minimum height=0.4cm] (V) at (8.5,2.1) {$\scriptstyle 5$};
\node[draw,circle,minimum height=0.4cm] (X) at (11,2.1) {$\scriptstyle 6$};
\fi
\end{tikzpicture}
}\caption{\label{LeGallLG2}Adaptive learning graph for Triangle Finding with its corresponding stages.}
\end{figure}
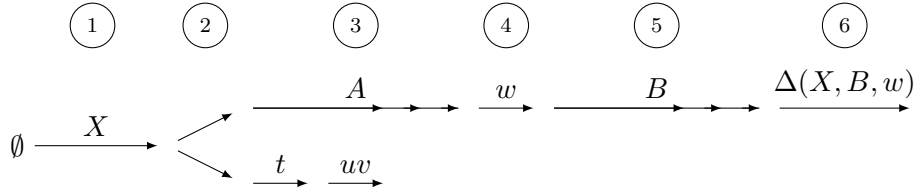
% PPPPPPPPPPPPPPPPPPPPPPPPPPPPPPPPPPP %
%\medbreak
%\noindent
%\textbf{Decomposition of the function $f$: a sufficient condition for negative instances.}
%
%We can decompose $f$ as:
%$$f=\bigvee_{X \subseteq V, \, |X| = x} g_{X}$$
%where $g_{X} = h_{X} \vee f_{X}$ and:
%\begin{itemize}
%	\item $h_{X} = 1$ iff $G$ has a triangle with a vertex in $X$
%	\item $f_{X} = 1$ iff $G$ has a triangle with no vertex in $X$.
%\end{itemize}
%We will also use the following notations, for $A \subseteq V, w \in V$:
%\begin{itemize}
%	\item $f_{X,A} = 1$ iff $G$ has a triangle between two vertices in $A$ and a third one outside $X$
%	\item $f_{X,A,w} = 1$ iff $G$ has a triangle between $w$ and two vertices in $A$ 
%\end{itemize}
%We have, for all $X$ and $a$,
%$$
%	\bigvee_{A \subseteq V, |A| = a} f_{X,A} = 0 \quad\Rightarrow\quad f_X = 0
%$$
%and for all $X$ and all $A$,
%$$
%	\bigvee_{w \in V} f_{X,A,w} = 0 \quad\Rightarrow\quad f_{X,A} = 0.
%$$
%In other words, to decide if $G$ is a negative instance ($f = 0$), it is enough to check if
%$$
%	\bigvee_{X \subseteq V, \, |X| = x} h_{X} \vee \left( \bigvee_{A \subseteq V, |A| = a} \bigvee_{w \in V} f_{X,A,w} \right) = 0.
%$$
%This condition will be easier to check in the analysis.
%
%% PPPPPPPPPPPPPPPPPPPPPPPPPPPPPPPPPPP %
%\medbreak

\noindent
\textbf{Stage 1.}
$\G$ consists in the combination of learning graphs $\G_X$, where $X$ is a $x$-subset of $[n]$, as in Lemma~\ref{groverExtended}.
The particularity of $\G_X$ is that they all compute $\Triangle$.

% PPPPPPPPPPPPPPPPPPPPPPPPPPPPPPPPPPP %
\noindent
\textbf{Stage 2.}
Each $\G_X$ is again a combination of two learning graphs $\F_{X}$ and $\HH_{X}$ as in Lemma~\ref{groverExtended}.
The learning graph $\F_X$, described in the remaining stages, computes $f_X$, whereas $\HH_X$ computes $h_X$.
Observe that $\HH_X$ consists in a very simple non-adaptative learning graph with negative complexity $xn^2$.
Therefore we can already deduce that
$$
	C^0(\G, G)
	\leq 
	\Exp_{X}(C^0(\F_X,G)) +xn^2.
$$

In the sequel we focus on the analysis of $\F_X$.

%{\color{blue}
%At Stage~3, $\F_{X}$ is decomposed using Lemma~\ref{johnsonExtended} with parameters $n'=n$, $k'=a$, $r'=2$
%and $\mathrm{\bf{S}}=ax$, $\mathrm{\bf{U}}=x$. Therefore,
%when $G$ has no triangle between a vertex outside $X$,
%
%{\color{red} UTILISER LE COROLLAIRE AVEC complexit\' de LEARN}
%$$C^0(\E_{X}, G) \leq
%(ax)^2 + \left(\frac{n}{a}\right)^2\left({a}\cdot x^2 + \Exp_{A}(C^0(\E_{X,A}, G))      \right),
%$$
%where $A$ is a $a$-subset of $[n]$,
%$f_{X,A}(G)=1$ iff $G$ has a triangle between two vertices in $A$ and a third one outside $X$,
%and $C^0(\E_{X,A}, G)$ is the complexity of a learning graph for $f_{X,A}$ in $G$.
%We could have enforce that $A\cap X=\emptyset$, but we prefer to make $A$ and $X$ independent for simplifying some further analysis.
%}

% PPPPPPPPPPPPPPPPPPPPPPPPPPPPPPPPPPP %
\noindent
\textbf{Stage 3.}
$\F_{X}$ is decomposed using Lemma~\ref{johnsonExtended} with $\Learn_{S} = \Loadall_{S}$ and parameters $n=|V|$, $k=a$, $r=2$. Therefore
$$
	C^0(\E_{X}, G)
	=
	O\left((\mathrm{\bf{S}}_{G,X})^2+\left(\frac{n}{a}\right)^{2}\left(a \cdot (\mathrm{\bf{U}}_{G,X})^2+ (\mathrm{\bf{C}}_{G,X})^2\right)\right)
$$
where we take
\begin{align*}
		&(\mathrm{\bf{S}}_{G,X})^2 = {\Exp_{A' \subset V \,:\, |A'| = a-2}  \left( |I_X(A')|^2 \right)},
		\\
		&(\mathrm{\bf{U}}_{G,X})^2 = \max_{a-2 \leq i <  a}  \left({ \Exp_{A' \subset A'' \subseteq V \,:\, |A'| = |A''|-1=i}  \left(|I_X(A'')\backslash I_X(A')|^2 \right)}\right),
		\\
		&(\mathrm{\bf{C}}_{G,X})^2 = {\Exp_{A\subseteq V \,:\, |A| = a} \left(C^0(\F_{X,A}, G)\right)},
\end{align*}
with $I_X(A') = X \times A$, $G_{I_X(A')}$ is the set of edges between $X$ and $A'$ in the graph $G$, and $\F_{X,A}$ is the learning graph for $f_{X,A}$
that we describe in the remaining stages.
%{\color{red} *** We could have enforced that $A\cap X=\emptyset$, but we prefer to make $A$ and $X$ independent for simplifying some further analysis. ***}

% PPPPPPPPPPPPPPPPPPPPPPPPPPPPPPPPPPP %
\noindent
\textbf{Stage 4.}
We use Lemma~\ref{groverExtended} and $w$ stands for the third triangle vertex.
Therefore, 
$$
	C^0(\F_{X,A}, G) \leq 
	{n} \times \Exp_{w}(C^0(\F_{X,A,w}, G)),
$$
where $w\in[n]$ and $\F_{X,A,w}$ is the learning graph for $f_{X,A,w}$ described below.
%{\color{red} *** Again we could have enforce that $w\not\in X$.}

% PPPPPPPPPPPPPPPPPPPPPPPPPPPPPPPPPPP %
\noindent
\textbf{Stage 5.}
Next, we use Lemma~\ref{johnsonExtended} with $\Learn_{S} = \Loadall_{S}$ and  parameters  $n'=a$, $k'=b$, $r'=2$. Therefore
%when $G$ has no triangle between $w$ and two vertices in $A$,
%$$C^0(\E_{X,A,w}, G) \leq 
%b^2 + \left(\frac{a}{b}\right)^2 \left({b} + \Exp_{B}(C^0(\E_{X,A,w,B}, G))      \right),
%$$
$$
	C^0(\F_{X,A,w}, G)
	=
	O\left((\mathrm{\bf{S}}_{G,X,A,w})^2+\left(\frac{a}{b}\right)^{2}\left(b \cdot (\mathrm{\bf{U}}_{G,X,A,w})^2+ (\mathrm{\bf{C}}_{G,X,A,w})^2\right)\right)
$$
where we take
\begin{align*}
		&(\mathrm{\bf{S}}_{G,X,A,w})^2 = {\Exp_{B' \subset A\, :\, |B'| = b-2}  \left( |I_{w}(B')|^2 \right)},
		\\
		&(\mathrm{\bf{U}}_{G,X,A,w})^2 = \max_{b-2 \leq i < b}  \left({ \Exp_{B' \subset B'' \subseteq V\,:\, |B'| = |B''|-1=i}  \left(|I_{w}(B'')\backslash I_{w}(B')|^2 \right)}\right),
		\\
		&(\mathrm{\bf{C}}_{G,X,A,w})^2 = {\Exp_{B \subseteq A \,:\, |B| = b} \left(C^0(\E_{X,A,w,B}, G)\right)},
\end{align*}
%\begin{align*}
%		&\mathrm{\bf{S}}_{G,X,A,w} = \sqrt{\Exp_{B \subseteq A \,:\, |B| = b-2}  \left( |I_{w}(B)| \cdot \log(|I_{w}(B)| +1) \cdot (|G_{I_{w}(B)}|+1) \right)},
%		\\
%		&\mathrm{\bf{U}}_{G,X,A,w} = \max_{B \subseteq A \,:\,1 \leq |B| < b-2}  \sqrt{ \Exp_{B' \subset B \,:\, |B'| = |B|-1}  \left(|I_{w}(B)\backslash I_{w}(B')| \cdot \log(|I_{w}(B)\backslash I_{X,A}(B')| +1) \cdot (|G_{I_{w}(B)\backslash I_{w}(B')}|+1) \right)},
%		\\
%		&\mathrm{\bf{C}}_{G,X,A,w} = \sqrt{\Exp_{B \subseteq A \,:\, |B| = b} \left(C^0(\E_{X,A,w,B}, G)\right)},
%\end{align*}
with $I_{w}(B') = \{w\} \times B'$, 
$G_{I_{w}(B')}$ is the set of edges linking $w$ to $B'$ in the graph $G$, 
%$f_{X,A,w,B} = f_{X,B,w}$,
and $\F_{X,A,w,B}$ is the learning graph for $f_{X,A,w,B}$ described in the last stage.

% PPPPPPPPPPPPPPPPPPPPPPPPPPPPPPPPPPP %
\noindent
\textbf{Stage 6.}
The last stage consists in the learning graph obtained by Lemma~\ref{groverExtended}, with negative complexity of order $|\Delta(X,B,w)|$,
for searching a potential edge in $\Delta(X,B,w)$.\smallskip
%$f_{X,A,w,B} = f_{X,B,w}$,
%for each graph $G$ with no (triangle) edge in $\Delta(X,B,w)$.

In order to conclude, we observe that
%we have the following bound %(without any assumption on the sparsity of $G$): 
for any $w \in V$ and any set of vertices $V_1 \subseteq V$, we have $|I_X(V_1)| = x |V_1|$ and  $|I_w(V_1)| = |V_1|$. 
Applying this for $V_1=A$ and $V_1=B$ we obtain
\begin{align*}
		\mathrm{\bf{S}}_{G,X} \leq ax  ,
		\qquad
		\mathrm{\bf{U}}_{G,X} \leq x ,
		\qquad
		\mathrm{\bf{S}}_{G,X,A,w} \leq b ,
		\qquad
		\mathrm{\bf{U}}_{G,X,A,w} \leq 1.
\end{align*}

We therefore get that $C^0(\G, G)$ has order
$$
	xn^2
	+ (ax)^2 
	+\left(\frac{n}{a}\right)^2\left(a\cdot x^2 
		+ {n}\left(b^2
			+\left(\frac{a}{b}\right)^2\left( {b}
			+ \Exp_{X,w,B} \big[\,|\Delta(X,B,w)|\,\big] \right)
		\right)
	\right).
$$
We now conclude using  Lemma~\ref{delta} with $V_1=V$, and $C(\G) = \sqrt{C^0(\G)}$ since $C^1(\G) \leq 1$.
\end{proof}

\begin{lemma}\label{delta}
%	{\color{red} *** COMBINATORIAL LEMMA --- A RETIRER ?? ***}	
Let $x,b$ be positive integers.
Let $G$ be a graph on a vertex set $V$ of size $n$ and let $B\subseteq V$ be a $b$-subset.
%First observe that Lemma~\ref{delta} remains valid if $X$ is restricted to some subset $V_1$ of vertices. 
Then
$$\Exp_{X,w} \big[\,|\Delta(X,B,w)|\,\big] \le\frac{b^{2}}{x},$$
where the expectation is taken over $x$-subsets $X\subseteq V$
and  vertices $w\in V$.
\end{lemma}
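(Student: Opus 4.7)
The plan is to linearise the expectation as a sum over pairs $(u,v)\in B^2$, factor each resulting probability using the independence of $X$ and $w$, and then apply a one-variable optimization to the per-pair contribution. Writing
\[\Exp_{X,w}\big[\,|\Delta(X,B,w)|\,\big]
= \sum_{(u,v)\in B^2} \Pr_{X,w}\!\left[\,w\in N_{u,v}\ \text{and}\ N_{u,v}\cap X=\emptyset\,\right],\]
I would observe that for a fixed pair $(u,v)$ with $d:=|N_{u,v}|$, the event ``$w\in N_{u,v}$'' depends only on $w$ while the event ``$N_{u,v}\cap X=\emptyset$'' depends only on $X$. Since $X$ and $w$ are drawn independently, each summand factors as $\frac{d}{n}\cdot \binom{n-d}{x}/\binom{n}{x}$.

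Next I would bound the hypergeometric factor by $(1-x/n)^d$, using the expansion $\binom{n-d}{x}/\binom{n}{x}=\prod_{j=0}^{d-1}\frac{n-x-j}{n-j}$ together with the elementary inequality $\frac{n-x-j}{n-j}\leq 1-\tfrac{x}{n}$ for $0\leq j<d$ (a one-line cross-multiplication). Combining with $(1-x/n)^d\leq e^{-xd/n}$ shows that each pair contributes at most $\frac{d}{n}\,e^{-xd/n}$. The main estimate is then a one-line optimization: substituting $t=xd/n$ rewrites this quantity as $\frac{1}{x}\cdot t e^{-t}\leq \frac{1}{xe}\leq \frac{1}{x}$, since $te^{-t}\leq 1/e$ on $[0,\infty)$. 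Summing over the $b^2$ pairs of $B^2$ yields the desired bound $b^2/x$.

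I do not anticipate a genuine obstacle: the argument is a chain of routine inequalities once the factorization is in place. The only conceptual point worth checking is that $X$ and $w$ really are drawn independently in the statement, so that the joint probability decouples without any need to condition on $w\notin X$; this is automatic because on the relevant event $w\in N_{u,v}$ and $N_{u,v}\cap X=\emptyset$ already enforces $w\notin X$, while the two marginal events remain functions of $X$ and $w$ alone and hence independent.
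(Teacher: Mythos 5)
Your proof is correct and follows essentially the same route as the paper's: linearize over pairs $(u,v)\in B^2$, factor the probability using the independence of $X$ and $w$, and bound the per-pair contribution by $\frac{1}{x}\alpha e^{-\alpha}\le \frac1x$. The only (welcome) refinement is that you handle the without-replacement probability exactly via $\binom{n-d}{x}/\binom{n}{x}\le(1-x/n)^d$, where the paper writes the slightly informal equality $\frac{t}{n}(1-\frac{t}{n})^x$ before passing to the same exponential bound.
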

\begin{proof}
Let $\Delta(X)$ be the set of pairs of vertices which are not both neighbors of any vertex in $X$.
				Let $B\subseteq V$ of size $b$, the expectation on $X$ and $w$ is:
\begin{equation}\label{expd}
\Exp_{X,w}\big[\,|\Delta(X,B,w)|\,\big]=\sum_{(u,v)\in B^{2}}\Pr_{X,w}((u,v)\in \Delta(X,B,w)).
\end{equation}
			
In order to bound the probability events of the right hand side, fix $(u,v)\in B^{2}$ and let
$N_{u,v}$ be the intersection of the neighborhoods of $u$ and $v$ in $V$. 
Then
$$\Pr_{X,w}((u,v)\in \Delta(X,B,w))=\Pr_{X,w}(w\in N_{u,v} \text{ and } (u,v)\in \Delta(X)).$$
The two events of the right hand side are independent, therefore with $t=|N_{u,v}|$ 
and $n=|V|$
we get
$$\Pr_{X,w}((u,v)\in \Delta(X,B,w))=\frac{t}{n}\left(1-\frac{t}{n}\right)^{x}.
$$
Renaming $\alpha=\frac{tx}{n}$ leads to
$$\Pr_{X,w}((u,v)\in \Delta(X,B,w))=\frac{\alpha}{x}\left(1-\frac{\alpha}{x}\right)^{x}\le \frac{\alpha e^{-\alpha}}{x}\le \frac{1}{x}.
$$

Finally, combining the above bound with equation~\eqref{expd} gives the result.
%$$\Exp_{X,w}\big[\,|\Delta(X,B,w)|\,\big]\le \frac{b^{2}}{x}.$$
\end{proof}

\section{Learning graph analyses for sparse graphs}\label{app:thm:sparse}
\subsection{Proof of Theorem~\ref{thm:sparse}}

%\begin{proof}[Proof of Theorem~\ref{thm:sparse}]

We reuse the notations introduced in the proof of Theorem~\ref{thm:dense}.
In addition we let $d=2m/n$ the average degree of the input graph.

Here $\HH_{X}$ has complexity $O(x dn\log n + x (d_2)^2) = O(x d n\log n)$ on any negative instance,
where $d_2=\sqrt{\Exp_v (|N_v|^2)}$. Indeed, fix any negative instance.
For each $v \in X$, we learn $N_v$ by loading $\{v\}\times V$ with negative complexity $|N_v| (n+1) \log (n +1)$.
Then we load $N_v\times N_v$ simply using $\Loadall$ with negative complexity $|N_v|^2$.
So summing those complexities for every $v\in X$ and taking the expectation on $x$-subsets $X \subseteq V$, 
the average negative complexity becomes
\begin{eqnarray*}
	&&\Exp_{X\subseteq V, |X|=x} \left[ \sum_{v \in X} \left(|N_v| (n+1) \log (n +1) + |N_v|^2\right)\right] \\
%	&=
%	xn +  \Exp_{X\subseteq V, |X|=x} \left[ \sum_{v \in X} |N_v|^2\right]
%	\\
%	&=
%	xn +  \sum_{X\subseteq V, |X|=x} {\frac{1}{{n \choose x}}} \sum_{v \in X} |N_v|^2
%	\\
%	&=
%	xn +  \sum_{v \in V} {\frac{1}{{n \choose x}}} \sum_{X \ni v, |X|=x} |N_v|^2
%	\\
%	&=
%	xn +  {\frac{{(n-1) \choose (x-1)}}{{n \choose x}}}  n \Exp_{v \in V} \left[ |N_v|^2 \right]
&=&	x \times \Exp_{v \in V} \left( |N_v| (n+1) \log (n +1) + |N_v|^2 \right)
	\\
	&=&
	xd(n+1)\log (n+1) +  x  (d_2)^2.
\end{eqnarray*}
%Moreover $(d_2)^2 = \Exp_v\left[|N_v|^2\right] \leq \Exp_v\left[|N_v|\right] n= d n$.

For the first step of the Johnson walk in $\F_X$, we now get using $\Loadsparse$
\begin{align*}
		\Exp_X \left( (\mathrm{\bf{S}}_{G,X})^2 \right)
		&= \Exp_X \Exp_{A' \subseteq V\,:\, |A'| = a-2}  \left( |I_X(A')| \cdot \log(|I_X(A')|+1) \cdot (|G_{I_X(A')}|+1) \right) \\
		&\leq ax \log(ax+1) \Exp_X \Exp_{A' \subseteq V \,:\, |A'| = a-2} (|E(X,A')|+1)  \\
		&= O\left(\left(\frac{ax}{n}\right)^2 m \log(ax+1) \right),
\end{align*}
where for the second step we used  that $|I_X(A')| = |X \times A'| \leq ax$, and for the last one Lemma~\ref{ninterExpExp} below with $X$ and $Y= A$. Similarly, using the fact that $|I_X(A'')\backslash I_X(A')| = |X \times (A'' \backslash A')| = x$, we obtain 
\begin{align*}
		\Exp_X \left(  (\mathrm{\bf{U}}_{G,X})^2 \right)
%		\\
%		&= \max_{1 \leq i < a-2}  \Exp_X \Exp_{A  \subseteq V \,:\, |A| =i}\Exp_{A' \subset A \,:\, |A'| = |A|-1}  \left(|I_X(A)\backslash I_X(A')| \cdot \log(|I_X(A)\backslash I_X(A')|+1) \cdot (|G_{I_X(A)\backslash I_X(A')}|+1) \right)
		&= \Exp_X\left(x \cdot \log(x+1) \cdot  \max_{a-2 \leq i < a} \left(\Exp_{A'  \subset V \,:\, |A'| =i}\left(\Exp_{v \in V \backslash A'}   (|E(X,v)|+1) \right)\right)\right)\\
		&= O\left(\Exp_X\left(x \cdot \log(x+1) \cdot  \left(\Exp_{v \in V}   (|E(X,v)|+1) \right)\right)\right)	\\
%		&= O\left( x \cdot \log(x+1) \cdot  \max_{1 \leq i < a-2} \left(\Exp_{A  \subseteq V \,:\, |A| =i-1}\Exp_{\alpha \in V \backslash A}  \left( x \frac{|N_\alpha|}{n} \right) 		\right)\right) \\
%		&= O\left( \frac{x^2}{n} \cdot \log(x+1) \cdot  \Exp_{v \in V }  (|N_v|)		\right)		\\
		&= O\left( \frac{x^2 m}{n^2} \cdot \log(x+1)
		\right),
%		\\
%		&\leq O\left(2 \begin{pmatrix}n-1\\ i-1\end{pmatrix}^{-1} \frac{x}{n^2} m \log(x+1) \right) \tag{\color{red}Not clear: CHECK AGAIN}
\end{align*}
where the second equality holds by Lemma~\ref{ninterExpExp} below with $X$ and $Y =\{v\}$.
% and $N=N_\alpha$.

For the second step of the walk, since $|I_{w}(B')| = |\{w\} \times B'| \leq b$, we have
by Lemma~\ref{lem:ninter}, with $x = b-2$, $V_1= A$ and $N=N_w\cap A$:
\begin{align*}
		(\mathrm{\bf{S}}_{G,X,A,w})^2 
%		&= \Exp_{B \subseteq A \,:\, |B| = b-2}  \left( |I_{w}(B)| \cdot \log(|I_{w}(B)|+1) \cdot (|G_{I_{w}(B)}|+1) \right)
%		\\
		&\leq b \log(b+1) \Exp_{B' \subset A \,:\, |B'| = b-2}  \left( |E(B',w)|+1 \right)
%		\\
%		&= b \log(b+1) \left( \frac{(b-2)  |N_w \cap A |}{a} +1 \right)
		\\
		&= O\left(\frac{b^2  |N_w\cap A |}{a} \log(b+1)  \right).
\end{align*}
%where the second equality holds we used by Lemma~\ref{lem:ninter}, with $x = b-2$, $V_1= A$ and $N=N_w\cap A$.
 Moreover, again by Lemma~\ref{lem:ninter} below but this time with $x = a$, $V_1= V$ and $N=N_w$, we get
$$
	\Exp_{w \in V} \Exp_{A \subseteq V \,:\, |A| = a} \left( |N_w\cap A | \right) = \frac{a}{n} \Exp_{w \in V} |N_w \cap V| = \frac{a d}{n}. 
$$
So,
$$
	\Exp_{A \subseteq V \,:\, |A| = a} \left( (\mathrm{\bf{S}}_{G,X,A,w})^2  \right) = O\left(b^2   \cdot \frac{d}{n} \log(b+1)  \right). 
$$

Last, since $|I_{w}(B'')\backslash I_{w}(B')| = |\{w\} \times (B'' \backslash B')| = 1$, we directly obtain:
\begin{align*}
		\mathrm{\bf{U}}_{G,X,A,w}^2 
%		&= \max_{1 \leq i < b-2}  \Exp_{B \subseteq A \,:\, |B| = i} \Exp_{B' \subset B \,:\, |B'| = |B|-1}  \left(|I_{w}(B)\backslash I_{w}(B')| \cdot \log(|I_{w}(B)\backslash I_{X,A}(B')|+1) \cdot (|G_{I_{w}(B)\backslash I_{w}(B')}|+1) \right)
%		\\
%		&= \left( \max_{b-2 \leq i < b}  \left(\Exp_{B' \subseteq A \,:\, |B'| = i} 
%		\left(\Exp_{v \in A \backslash B'}  \left( |E(w, v)|+1 \right)\right)\right)\right)
		&= O( 1 ).
\end{align*}
Thus, the total negative complexity is of order
\begin{equation}
\label{eq:zerocomp-sparse}
	\underbrace{
	\left[
	xdn+(ax)^2\cdot\frac{d}{n}+\left(\frac{n}{a}\right)^2\left(a\cdot x^2 \cdot\frac{d}{n} + {n}\left(b^2\cdot\frac{d}{n}+\left(\frac{a}{b}\right)^2\left( {b}+ \frac{b^2}{x} \right) \right) \right)
	\right]
	}_{\displaystyle K_n}
	\times
	\log(n).
\end{equation}
%{\color{red}
%$$
%	n^2+xdn+(ax)^2\cdot\frac{d}{n} \log(ax+1)+\left(\frac{n}{a}\right)^2\left(a\cdot x^2 \cdot\frac{d}{n}  \log(x+1) + {n}\left(\frac{b^2}{a} d+\left(\frac{a}{b}\right)^2\left( {b}+ \frac{b^2}{x} \right) \right) \right),
%$$
%}
Denoting $t=\frac{d}{n}\leq 1$, we have:
$$
	K_n
	=
	n^2+t\left(xn^2+a^2x^2+\frac{n^2}{a^2}\left(ax^2+nb^2\right)\right)
+\frac{n^3}{b}\left( 1+\frac{b}{x}\right).
$$
%{\color{red}
%$$
%n^2+t\left(xn^2+a^2x^2 \log(ax+1) +\frac{n^2}{a^2}\left(ax^2 \log(x+1) +nb^2\right)\right)
%+\frac{n^3}{b}\left( 1+\frac{b}{x}\right).
%$$
%}
%Recall that  $b \leq a\leq n$.
%If $x\leq b$, we get:
%$$
%K_n \in O\left(
%t\left(xn^2+a^2x^2+\frac{n^3 b^2}{a^2} \right)
%+\frac{n^3}{x}
%\right).
%$$
%whereas if $a\geq x\geq b$, we have $xn^2\geq x^2n^2/a$, hence:
%$$
%K_n \in O\left(
%t\left(xn^2+a^2x^2+\frac{n^3 b^2}{a^2} \right)
%+\frac{n^3}{b}
%\right).
%$$
%Taking $\sqrt{n}\leq b=x\leq a$ and $a=n^{3/4}$ leads to
%$$
%t b^2n^{3/2}
%+\frac{n^3}{b}.
%$$
%Setting $b=\sqrt{n}/t^{1/3}\geq \sqrt{n}$ gives a $0$-complexity of $n^{5/2}t^{1/3}$, and
%a global complexity of $n^{5/4}t^{1/6}=n^{13/12}d^{1/6}=n^{11/12}m^{1/6}$.
%This complexity is valid until $\sqrt{n}/t^{1/3}\leq n^{3/4}$, that is when $t\geq 1/n^{3/4}$, \textit{i.e.} $d\geq n^{1/4}$.

If $x = b \leq a \leq n$, we have $xn^2\geq x^2n^2/a$ and $\frac{n^3}{b} \geq n^2$, hence :
$$
	K_n =
	O\left(
	t\left(xn^2+a^2x^2+\frac{n^3 b^2}{a^2} \right)
	+\frac{n^3}{b}
	\right). 
$$
Taking $a=n^{3/4}$ and $b = x = \sqrt{n}/t^{1/3}$, leads to:
$$
	K_n = O\left(
	t b^2n^{3/2}
	+\frac{n^3}{b}
	\right)
	=
	O\left(
	n^{5/2}t^{1/3}
	\right) . 
$$
Going back to~\eqref{eq:zerocomp-sparse}, this yields a negative complexity of order:
$$
	n^{5/2}t^{1/3}
	\times
	\log(n),
$$
and thus a total complexity of order:
$$
	n^{5/4}t^{1/6} \cdot \log(n)^{1/2} = n^{11/12}m^{1/6} \log(n)^{1/2}.
$$
This complexity is valid until $\sqrt{n}/t^{1/3}\leq n^{3/4}$, that is when $t\geq 1/n^{3/4}$, \textit{i.e.} $d\geq n^{1/4}$.
This concludes the proof of the theorem.

%\end{proof}

%\subsubsection{Useful lemmas}
\begin{lemma}\label{lem:ninter} %\label{lem:ninter2}
Let $1 \leq x \leq |V|$ and $N \subseteq V_1\subseteq V$. % be such that $x|N|\geq |V_1|$ when $k\geq 2$. % and $v\in V$.
Then
	$$\Exp_{X\subseteq V_1,\; |X|=x}  |  N \cap X  | = \frac{x |N |}{ |V_1|}.$$
%	where the expectation is taken over $x$-subsets $X\subseteq V_1$.
\end{lemma}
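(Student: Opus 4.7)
The plan is to prove this by linearity of expectation combined with a symmetry argument on uniformly random $x$-subsets.

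First, I would decompose $|N \cap X|$ as a sum of indicator variables: write
\[
|N \cap X| = \sum_{v \in N} \mathbf{1}[v \in X].
\]
Taking expectation over a uniformly random $x$-subset $X \subseteq V_1$ and using linearity of expectation yields
\[
\Exp_{X} |N \cap X| = \sum_{v \in N} \Pr_{X}[v \in X].
\]

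Next, I would compute the probability that a fixed $v \in V_1$ belongs to $X$. Since $X$ is drawn uniformly from the $\binom{|V_1|}{x}$ subsets of $V_1$ of size $x$, a direct counting argument gives $\Pr[v \in X] = \binom{|V_1|-1}{x-1}/\binom{|V_1|}{x} = x/|V_1|$. (Alternatively, this follows from the symmetry of the uniform distribution: every element of $V_1$ is equally likely to be included, and the expected size of $X$ is $x$, so each inclusion probability must be $x/|V_1|$.) Crucially, this uses the assumption $N \subseteq V_1$, so every $v \in N$ is a valid element of $V_1$ for which the probability applies.

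Substituting back gives
\[
\Exp_{X} |N \cap X| = \sum_{v \in N} \frac{x}{|V_1|} = \frac{x |N|}{|V_1|},
\]
which is the claimed identity. There is no real obstacle here; the only thing to be slightly careful about is the hypothesis $N \subseteq V_1$, which is needed so that the inclusion probability $x/|V_1|$ applies uniformly to every element of $N$.
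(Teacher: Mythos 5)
Your proof is correct and follows essentially the same route as the paper: decompose $|N\cap X|$ into indicator variables, apply linearity of expectation, and use the fact that each element of $V_1$ lies in a uniformly random $x$-subset with probability $x/|V_1|$. The only difference is that you spell out the counting argument $\binom{|V_1|-1}{x-1}/\binom{|V_1|}{x}$ explicitly, which the paper leaves implicit.
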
 
\begin{proof}
%Since the expectation is over $X \subseteq V_1$, %we have
%$$
%	\Exp_X \left[|  N_v\cap X  |^k\right] = \Exp_X |  N_v \cap V_1 \cap X  |.
%$$ 
%So 
%it is sufficient to show that, for any $N \subseteq V_1$,
%\begin{equation}\label{eq:expInstersecLem}
%	\Exp_X \left(|  N\cap X  |\right)^k = \left(\frac{x |N|}{|V_1|}\right)^k,
%\end{equation}
%where the expectation is over $x$-subsets $X\subseteq V_1$. We proceed to the proof.
Let $\mathbf 1_X$ be the indicator function of $X$. Then
observe that the left hand side can be rewritten as
$$
	\Exp_X |  N\cap X  | = \Exp_X \left(\sum_{u \in N} {\mathbf 1}_{X}(u)\right) =
	\sum_{u\in N} \Exp_X  \left({\mathbf 1}_{X}(u)\right) .
$$
%	\left(\sum_{u \in N} \Exp_X  {\mathbf 1}_{X}(u) \right)^k,
%Observe that ${\mathbf 1}_{X}(u)$ and ${\mathbf 1}_{X}(v)$ are independent for $u\neq v$,
%and that $({\mathbf 1}_{X}(u))^l={\mathbf 1}_{X}(u)$ for all $l\geq 1$.
Then we conclude by observing that each term of the sum on the right hand side
 satisfies $\Exp_X  ({\mathbf 1}_{X}(u))=\frac{x}{|V_1|}$, independently of $u\in V_1$.
%This concludes the proof.
\end{proof}

\begin{lemma}\label{ninterExpExp}
Let $1 \leq x,y \leq |V|$.
Let $E(X,Y)$ denote the set of edges between $X$ and $Y$. Then
	$$
		\Exp_{X,Y\subseteq V,\; |X|=x,\; |Y|=y} |  E(X,Y)  | =  \frac{2 x y m}{n^2} ,
	$$
	%and the expectation is over $x$-subsets $X\subseteq V$ and $y$-subsets $Y \subseteq V$, and 
\end{lemma}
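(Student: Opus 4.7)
The plan is to proceed by linearity of expectation, decomposing $|E(X,Y)|$ into a sum of indicator random variables over the edges of $G$, and then exploiting the independence between $X$ and $Y$.

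First I would write, for each edge $uv \in E$,
\[
|E(X,Y)| = \sum_{uv \in E} \bigl(\mathbf{1}[u \in X, v \in Y] + \mathbf{1}[v \in X, u \in Y]\bigr),
\]
which corresponds to counting ordered pairs $(u,v) \in X \times Y$ with $uv \in E$. Taking expectations and using linearity reduces the problem to computing $\Pr(u \in X, v \in Y)$ and $\Pr(v \in X, u \in Y)$ for fixed endpoints.

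Next, since $X$ and $Y$ are drawn independently, each such joint probability factors as a product. Invoking the single-set computation from Lemma~\ref{lem:ninter} (with $V_1 = V$ and $N = \{u\}$, respectively $N = \{v\}$), we have $\Pr(u \in X) = x/n$ and $\Pr(v \in Y) = y/n$, and symmetrically for the other term. Thus each of the two indicators contributes $xy/n^2$ in expectation.

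Summing over the $m$ edges of $G$ gives $\Exp|E(X,Y)| = 2m \cdot xy/n^2$, which is exactly the claimed bound. There is no real obstacle here; the only point that requires a moment of care is the convention for how an edge $uv$ with both endpoints potentially landing in $X \cap Y$ is counted, but the ordered-pair interpretation above makes the factor of $2$ transparent and matches the stated formula exactly.
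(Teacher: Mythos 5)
Your proof is correct and follows essentially the same route as the paper's: linearity of expectation combined with the independence of $X$ and $Y$, with the paper summing over vertices $v\in Y$ and invoking Lemma~\ref{lem:ninter} for $\Exp_X|N_v\cap X|$ rather than summing directly over ordered endpoint pairs of edges as you do. Your remark on the ordered-pair counting convention is apt, since it is exactly the convention implicit in the paper's identity $|E(X,Y)|=\sum_{v\in Y}|E(X,\{v\})|$ and is needed for the factor $2m=\sum_v|N_v|$.
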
 
\begin{proof}
For any $v \in V$ we denote $N_v \subseteq V$ its neighbors.
We prove the equality by decomposition the expectation term:
\begin{eqnarray*}
	\Exp_{X,Y} |  E(X,Y)  | 
	&=& \Exp_{X,Y} \sum_{v \in Y} | E(X,\{v\}) | 
	\\
	&=& \Exp_{X} \sum_{v \in V} | E(X,\{v\}) | \times \Pr_{Y}(v \in Y)
	\\
	&=& \frac{y}{n}  \times\sum_{v \in V} \Exp_{X} | N_v \cap X |
	\\
	&=&  \frac{x y}{n^2}  \times\sum_{v \in V}|N_v| \quad\text{by Lemma~\ref{lem:ninter} with $k=1$, $V_1=V$ and $N=N_v$}
	\\
	&=&  \frac{2 x y m}{n^2}.
\end{eqnarray*}
%Hence the conclusion.
\end{proof}

\iffalse
In particular, when the edge degree of the vertices is bounded by, say, $d$, we obtain:
$$
	\Exp_{X} \Exp_{Y} |  E(X,Y)  | = 2 x y \frac{|E|}{|V|^2} \leq x y \frac{d}{V},
$$
since $|E| = \frac{|V| d}{2}$.
\fi

\subsection{Proof of Theorem~\ref{thm:sparsenew}}

Let us denote $\G$ the learning graph of Figure~\ref{SparseNewLG}. It can be seen as a special case of the one of Figure~\ref{LeGallLG}
with $X=\emptyset$ and $A=V$ ({\it i.e.} $x=0$ and $a=n$).
That is we start at Stage~5, and in our case $\Delta(X,B,w)=(N_w \cap B)^2$.
Moreover we are going to use $\Loadall$ everywhere
except for the first part, where we use $\Loadsparse$ in order to minimize the term
$(\mathrm{\bf{S}}_{G,X,A,w})^2$.

Therefore we can duplicate the analysis in the proof of Theorem~\ref{thm:sparse} 
starting from Stage~4 and replacing $\Delta(X,B,w)$ by $(N_w \cap B)$.
Then we get that the negative complexity for any graph $G$  satisfies
$$	C^0(\G, G)
	= O\left(n\left(\frac{b^2 d}{n} \cdot \log(b+1) +\left(\frac{n}{b}\right)^{2}\left(b + \Exp_{w, B} \left( |  N_w\cap B  |^2 \right)\right)\right)\right).$$
	
Then, the last piece of the proof is provided by Lemma~\ref{lem:ninter2} below
which gives, with $x = b, V_1 = V$, and $N = N_w$,
$$
	\Exp_{w\in V, B\subseteq V\,:\, |B|=b} \left( |  N_w\cap B  |^2 \right)
	\leq 2\left( \Exp_w \left(\frac{b^2 |N_w|^2 }{ n^2}\right)\right)
	\leq 2\left( \frac{b^2 (d_2)^2 }{n^2}\right),
$$
where  $d_2 = \sqrt{\Exp_v \big[\,|N_v|^2\,\big]}$.

This concludes the proof of the theorem.

\begin{lemma} \label{lem:ninter2}
Let $1 \leq x \leq |V|$ and $N \subseteq V_1\subseteq V$  be such that $x|N|\geq |V_1|$. % when $k\geq 2$. % and $v\in V$.
Then
	$$\Exp_{X\subseteq V_1,\; |X|=x}  \left(|  N \cap X  |^2 \right) \leq 2\left(\frac{x |N |}{ |V_1|}\right)^2.$$
%	where the expectation is taken over $x$-subsets $X\subseteq V_1$.
\end{lemma}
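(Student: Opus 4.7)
The plan is to compute $\Exp_X(|N\cap X|^2)$ directly by expanding the square into a double sum of indicator products, and then use the hypothesis $x|N|\ge |V_1|$ only at the very end to absorb the linear term into the quadratic one.

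More precisely, write $|N\cap X| = \sum_{u\in N} \mathbf{1}_X(u)$, so that
$$|N\cap X|^2 \;=\; \sum_{u\in N} \mathbf{1}_X(u) \;+\; \sum_{\substack{u,v\in N\\ u\neq v}} \mathbf{1}_X(u)\mathbf{1}_X(v).$$
Taking expectations over a uniform $x$-subset $X\subseteq V_1$, the diagonal terms give $\Exp_X(\mathbf{1}_X(u)) = x/|V_1|$ for each $u\in N\subseteq V_1$ (as in Lemma~\ref{lem:ninter}), contributing $|N|\cdot x/|V_1|$ in total. For the off-diagonal terms, by symmetry $\Exp_X(\mathbf{1}_X(u)\mathbf{1}_X(v)) = \Pr_X(u,v\in X) = \frac{x(x-1)}{|V_1|(|V_1|-1)} \le \frac{x^2}{|V_1|^2}$ for every pair $u\neq v$ in $N$, contributing at most $|N|(|N|-1)\cdot x^2/|V_1|^2 \le (|N|\,x/|V_1|)^2$. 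Letting $\mu = x|N|/|V_1|$, this gives
$$\Exp_X\bigl(|N\cap X|^2\bigr) \;\le\; \mu + \mu^2.$$

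The final step is where the hypothesis enters: the assumption $x|N|\ge |V_1|$ is exactly $\mu\ge 1$, which yields $\mu\le \mu^2$, hence $\mu+\mu^2\le 2\mu^2$, giving the claimed bound. No step is really a serious obstacle here; the only thing to watch is to avoid the trivial overcounting $|N|(|N|-1)\le |N|^2$ combined with $x(x-1)/[|V_1|(|V_1|-1)]\le x^2/|V_1|^2$, which is valid because $x\le |V_1|$. The hypothesis $x|N|\ge |V_1|$ is necessary: without it the linear term $\mu$ can dominate $\mu^2$ by an arbitrarily large factor, so some condition of this form is unavoidable for a bound of the stated shape.
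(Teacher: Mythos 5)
Your proof is correct and follows essentially the same route as the paper: expand $|N\cap X|^2$ into a double sum of indicators, bound the off-diagonal terms by $(x|N|/|V_1|)^2$, and use $x|N|\ge|V_1|$ to absorb the linear term. In fact you are slightly more careful than the paper on the cross-term, since the paper asserts that $\mathbf{1}_X(u)$ and $\mathbf{1}_X(v)$ are independent for $u\neq v$ (they are actually negatively correlated under uniform $x$-subset sampling), whereas you compute $\Pr_X(u,v\in X)=\frac{x(x-1)}{|V_1|(|V_1|-1)}$ exactly and then bound it by $x^2/|V_1|^2$ using $x\le|V_1|$; both versions yield the same final inequality.
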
 
\begin{proof}
%Since the expectation is over $X \subseteq V_1$, %we have
%$$
%	\Exp_X \left[|  N_v\cap X  |^k\right] = \Exp_X |  N_v \cap V_1 \cap X  |.
%$$ 
%So 
%it is sufficient to show that, for any $N \subseteq V_1$,
%\begin{equation}\label{eq:expInstersecLem}
%	\Exp_X \left(|  N\cap X  |\right)^k = \left(\frac{x |N|}{|V_1|}\right)^k,
%\end{equation}
%where the expectation is over $x$-subsets $X\subseteq V_1$. We proceed to the proof.
Similarly to the proof of Lemma~\ref{lem:ninter},
let $\mathbf 1_X$ be the indicator function of $X$. Then
%observe that the left hand side can be rewritten as
$$
	\Exp_X\left( |  N\cap X  |^2 \right) = \Exp_X \left( \left(\sum_{u \in N} {\mathbf 1}_{X}(u)\right)^2 \right) =
	\sum_{u,v\in N} \Exp_X \left(  {\mathbf 1}_{X}(u)  {\mathbf 1}_{X}(v)\right).
$$
%	\left(\sum_{u \in N} \Exp_X  {\mathbf 1}_{X}(u) \right)^k,
Observe that ${\mathbf 1}_{X}(u)$ and ${\mathbf 1}_{X}(v)$ are independent for $u\neq v$,
and that $({\mathbf 1}_{X}(u))^2={\mathbf 1}_{X}(u)$.
Therefore
$$
	\Exp_X\left( |  N\cap X  |^2 \right) = \sum_{u, v\in N,\; u\neq v} \left(\Exp_X  {\mathbf 1}_{X}(u)\right) \left(\Exp_X  {\mathbf 1}_{X}(v)\right)+\sum_{u\in N}\Exp_X  {\mathbf 1}_{X}(u).
$$
Remind that for all $u\in V_1$, $\Exp_X  {\mathbf 1}_{X}(u)=\frac{x}{|V_1|}$.
Thus
$$
	\Exp_X\left( |  N\cap X  |^2 \right) = |N|(|N|-1)\left(\frac{x}{|V_1|}\right)^2+|N| \frac{x}{|V_1|}.$$

Using $x|N|\geq |V_1|$, we finally get 
$$
	\Exp_X\left( |  N\cap X  |^2 \right) \leq |N|(|N|-1)\left(\frac{x}{|V_1|}\right)^2+\left(|N| \frac{x}{|V_1|}\right)^2
	\leq 2 \left(|N| \frac{x}{|V_1|}\right)^2.$$

\end{proof}

\end{document}